\documentclass[11pt]{article}
\usepackage[T1]{fontenc}
\usepackage[utf8]{inputenc}
\usepackage[margin=1in]{geometry}

\usepackage{amsfonts,amsthm,amsmath,amssymb}
\usepackage{cite,microtype,graphicx,hyperref}
\usepackage[capitalize,nameinlink]{cleveref}

\newtheorem{theorem}{Theorem}
\newtheorem{algorithm}{Algorithm}
\newtheorem{lemma}{Lemma}

\title{Finding Relevant Points for Nearest-Neighbor Classification}
\author{David Eppstein\thanks{Department of Computer Science, University of California, Irvine. This research was performed in part through the 1st and 2nd Virtual Workshops on Computational Geometry, organized in 2020 and 2021 by Erik Demaine. We thank Erik and the other workshop participants for hosting these events and for bringing our attention to these problems.}}

\date{ }

\begin{document}
\maketitle  

\begin{abstract}
In nearest-neighbor classification problems, a set of $d$-dimensional training points are given, each with a known classification, and are used to infer unknown classifications of other points by using the same classification as the nearest training point. A training point is \emph{relevant} if its omission from the training set would change the outcome of some of these inferences. We provide a simple algorithm for thinning a training set down to its subset of relevant points, using as subroutines algorithms for finding the minimum spanning tree of a set of points and for finding the extreme points (convex hull vertices) of a set of points. The time bounds for our algorithm, in any constant dimension $d\ge 3$, improve on a previous algorithm for the same problem by Clarkson (FOCS 1994).
\end{abstract}

\section{Introduction}
Nearest-neighbor classification is a widely-used supervised learning technique in which, from training data consisting of geometric points with discrete classifications, one infers the classification of new points as equal to the classification of the nearest training point~\cite{CovHar-TIT-67}. This technique has motivated many important developments in exact and approximate nearest-neighbor searching, including the construction of Voronoi diagrams~\cite{AurKleLee-13,Cha-DCG-93,Dwy-DCG-91,GuiKnuSha-Algo-92,Wat-CJ-81}, the development of point-location data structures for performing queries in these diagrams~\cite{GuiKnuSha-Algo-92,PreTam-SICOMP-92}, quadtree-based data structures for approximate nearest neighbors in spaces of moderate dimension~\cite{AryMouNet-JACM-98,AryMalMou-JACM-10,Cha-DCG-98,EppGooSun-IJCGA-08}, and locality-sensitive hashing for higher-dimensional data~\cite{IndMot-STOC-98,GioIndMot-VLDB-99,DatImmInd-SoCG-04,AndInd-CACM-08}.

Voronoi diagrams have high worst-case complexity even for spaces of moderate dimensions: for $n$ points in $d$ dimensions, the Voronoi diagram can have complexity $\Theta(n^{\lceil d/2\rceil})$~\cite{DewVra-UM-77,Kle-AM-80,Sei-AGDM-91}. Approximate nearest-neighbor searching data structures are better-behaved, but high-dimensional classification problems using them still have a complexity only a small factor faster than a naive scan of all training data for each new input point. To speed up these methods, it is natural to consider a preprocessing stage that reduces the training set to a smaller subset, its \emph{relevant points}, before building these data structures~\cite{Cla-FOCS-94,BreDemEri-DCG-05}. Here, we define a point to be relevant if it is needed for correct nearest-neighbor classification: removing it would change the nearest-neighbor classification of some points of $\mathbb{R}^d$. Removing all points that are not relevant leaves all nearest-neighbor classifications unchanged (see \cref{thm:correctness}). Although this has no benefit in the worst case, it is reasonable to expect that on average for smooth-enough input distributions and smooth-enough decision boundaries (we leave those terms deliberately vague and non-rigorous) a training set of $n$ points in $d$ dimensions may be reduced to a smaller subset of $O(n^{(d-1)/d})$ relevant points, which could in some cases be a significant savings.

In this work we investigate a simple algorithm for quickly identifying which points of a training set are relevant and which are not. Our algorithm is \emph{output-sensitive}: it is faster when there are few relevant points, and slower when there are many. It is based on the solutions to two previously-studied geometric problems: the construction of Euclidean minimum spanning trees, spanning trees for the complete graph on a given set of points, weighted by the Euclidean distances between pairs of points, and the identification of extreme points, the vertices of the convex hull of a given set of points.

\subsection{New results}
Given a set of training points with discrete classifications (not assumed to be binary nor in general position),
our algorithm performs the following steps, using a single Euclidean minimum spanning tree construction followed by an extreme-point computation per relevant point.  As we prove, it finds exactly the set of relevant points.

\begin{algorithm}[relevant points of a training set]
\label{alg:relevant}
~\\[-3ex]
\begin{enumerate}
\item Find a Euclidean minimum spanning tree $T$ of the training set.
\item Find the edges in $T$ whose two endpoints have different classifications, and initialize the set $R$ of relevant points to consist of the endpoints of these edges.
\item For each relevant point $r$ added to $R$ (either initially or within this loop), perform the following steps:
\begin{enumerate}
\item Invert through a unit sphere centered at $r$ all of the training points whose classification differs from $r$, producing a point set $S_r$, including also $r$ itself in $S_r$.
\item Identify the extreme points of $S_r$.
\item Add to $R$ the training points corresponding to extreme points of $S_r$.
\end{enumerate}
\end{enumerate}
\end{algorithm}

\begin{figure}[t]
\includegraphics[width=\textwidth]{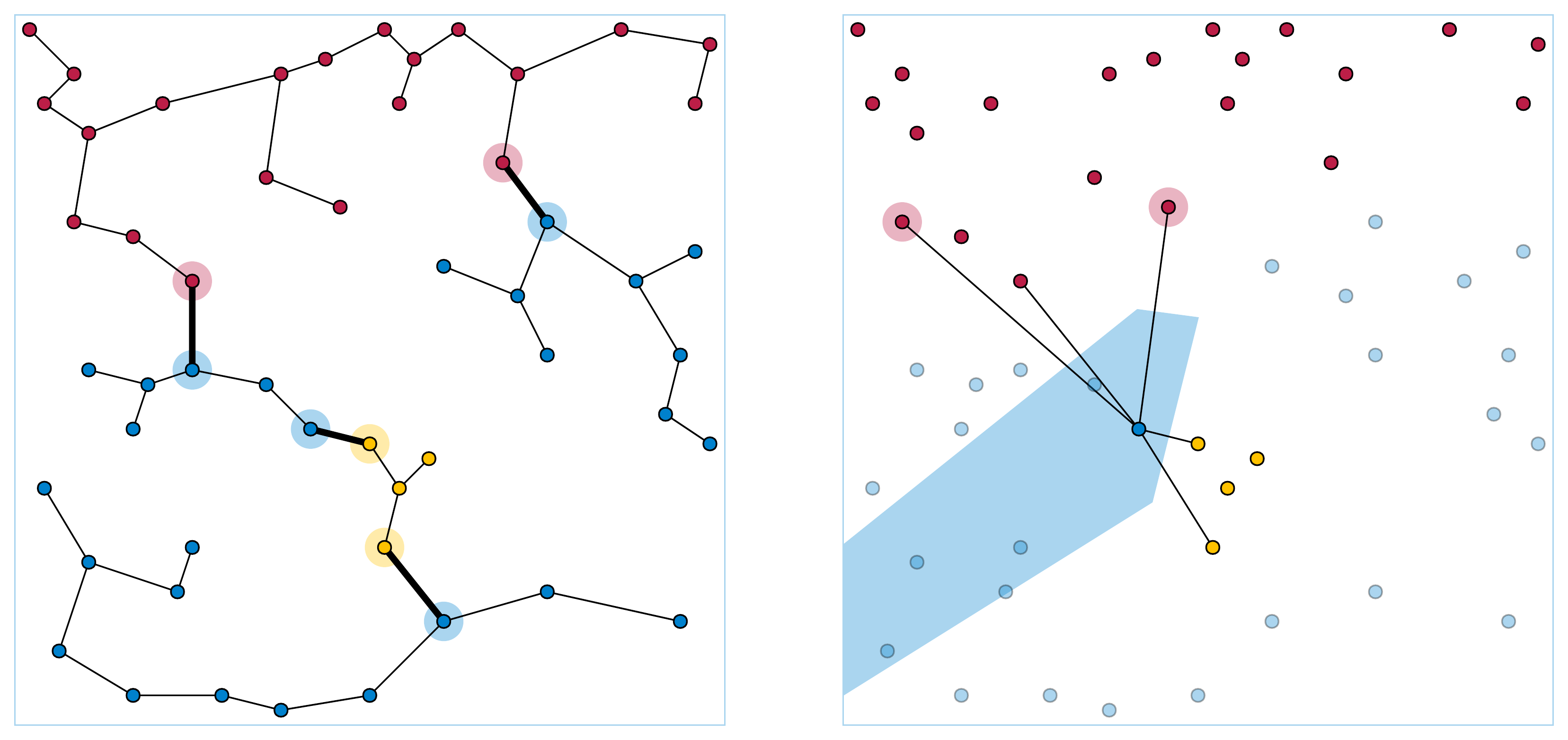}
\caption{The steps of \cref{alg:relevant}. Left: A training set in $\mathbb{R}^2$ with points classified as red, blue, and yellow, and its minimum spanning tree, with the eight highlighted points identified as relevant because they are endpoints of bichromatic edges. Right: The same training set, showing one of the previously-identified relevant points $r$ (the single dark blue point), the set $S_r$ (all dark points), the cell for $r$ in the Voronoi diagram of $S_r$ (light blue shaded region), and the five Voronoi neighbors of $r$. Three of those neighbors were already identified as relevant from the minimum spanning tree; the other two neighbors, identified as relevant in the second part of the algorithm, are highlighted.}
\label{fig:steps}
\end{figure}

The steps of the algorithm are illustrated in \cref{fig:steps}. The intuition behind \cref{alg:relevant} is that the minimum spanning tree phase of the algorithm finds a piece of each component of the decision boundary, a wall between the Voronoi cells of two relevant points. The extreme point phase finds the neighbors of each relevant point in the Voronoi diagram of $S_r$, including the relevant points that define neighboring walls of the decision boundary. In this way, it expands each piece of boundary to the full component of the boundary, from each wall to its adjacent walls in the component, without finding any false positives, and without needing to know anything about the topology of the boundary. The algorithm's efficiency comes from output-sensitivity both in the number of calls to the extreme-point subproblem and within the algorithms for this subproblem. Both the Euclidean minimum spanning tree and the extreme-point subproblem admit either simple and dimension-independent algorithms or asymptotically faster but more dimension-specific and more complex algorithms, but unfortunately we do not know of algorithms that are both simple and optimally efficient. We analyze the same overall algorithm both ways, using both kinds of  algorithm for the subproblems. This analysis gives us the following results:

\begin{itemize}
\item Using simple algorithms, for an input of bounded dimension with $k$ relevant points, we can identify the relevant points in time $O(n^2+k^2n)$. For an input of unbounded dimension, we can identify the relevant points using $O(kn)$ linear programs.
\item Using more complex subroutines for the Euclidean minimum spanning tree and extreme points, we can identify the relevant points in randomized expected time $O\bigl((n\log n)^{4/3}+kn\log k\bigr)$ for three-dimensional points, and in time
\[
O\left(n^{2-\frac{2}{\lceil d/2\rceil+1}+\varepsilon}
+k(nk)^{1-\frac{1}{\lfloor d/2\rfloor+1}}(\log n)^{O(1)}\right).
\]
for $d$-dimensional points, for any $\varepsilon>0$. For instance, for $d=4$ this bound is $O(n^{4/3+\varepsilon}+k^{5/3}n^{2/3}\log^{O(1)}n)$, and for $d=5$ it is $O(n^{3/2+\varepsilon}+k^{5/3}n^{2/3}\log^{O(1)}n)$.
\end{itemize}

The nearest-neighbor decision boundary, and not just the set of relevant points, can be constructed as a sub-complex of the Voronoi diagram of the relevant points, consisting of the $(d-1)$-dimensional faces of the diagram that separate cells of opposite classifications. This construction can be performed using standard Voronoi diagram construction algorithms~\cite{Cha-DCG-93} in an additional time of $O(k\log k + k^{\lceil d/2\rceil})$, independently of~$n$.
For all dimensions greater than two these time bounds are significantly faster than the $O(n^{\lceil d/2\rceil})$ time that could be obtained by constructing the Voronoi diagram directly and then using its faces to identify the classification decision boundary.

We note that the set of relevant points found by \cref{alg:relevant} is not necessarily the smallest set of points that would have the same decision boundary, or the smallest subset of the given points that would have the same decision boundary. It is, rather, the set of all training points whose omission from the whole data set (or from the resulting subset of relevant points) would change the decision boundary. Finding the smallest set of points with the same decision boundary, in high dimensions, seems likely to be a much more difficult task.

\subsection{Related work}

The problem of constructing nearest-neighbor decision boundaries, in an output-sensitive way, was considered by Bremner et al.~\cite{BreDemEri-DCG-05}, for the special case of two-dimensional training data. They showed that, for training sets of this dimension, the relevant points and their resulting nearest-neighbor decision boundary can be found in time $O(n\log k)$.

In higher dimensions, the only work we are aware of for this problem is that of Clarkson~\cite{Cla-FOCS-94}, who (in our terms) gave a simple algorithm for finding the relevant points whose running time is $O\bigl(\min(n^3,kn^2\log n)\bigr)$ whenever the dimension is bounded. Our time bounds are significantly faster than Clarkson's in the cases for which this sort of thinning is particularly useful, when $k$ is much smaller than~$n$.

We will survey algorithms for the two subroutines we use, for Euclidean minimum spanning trees and extreme points, in our discussion of these problems in the next section.

\section{Preliminaries}

\subsection{Voronoi diagrams and Delaunay graphs}

The Voronoi diagram of a finite set $S$ of points in $\mathbb{R}^d$ (called \emph{sites} in this context) is a collection of convex polyhedra (possibly unbounded), one for each site site $s$, consisting of the points in $\mathbb{R}^d$ for which $s$ is a nearest site, one with minimum Euclidean distance to the point~\cite{AurKleLee-13}. We call such a polyhedron the \emph{cell} of $s$. It is an intersection of a finite system of halfspaces, the halfspaces that contain $s$ and have as their boundaries the hyperplanes halfway between $s$ and each other site. These hyperplanes are the perpendicular bisectors of line segments connecting $s$ to each other site. For our purposes it is convenient to think of the cells as closed sets, containing their boundaries, and intersecting each other at shared boundary points. Their interiors, however, are disjoint. The union of all the cells equals  $\mathbb{R}^d$.

The Voronoi diagram has finitely many \emph{faces}, the intersections of finite sets of cells. The dimension of a face is the dimension of its affine hull. We define a \emph{wall} of the Voronoi diagram to be a face of dimension $d-1$, and a \emph{junction} of the diagram to be a face of dimension $d-2$.
The affine hull of a junction, a $(d-2)$-dimensional subspace of $\mathbb{R}^d$, is perpendicular to a family of two-dimensional planes. If we choose a plane in this family that intersects the junction in its relative interior, a point that is not part of any lower-dimensional face, then the junction will appear in this intersection as a point, and in a neighborhood of this point the walls that include the junction will appear as rays and the cells that include the junction will appear as convex wedges between these rays. The geometry of this structure of rays and wedges does not depend on the choice of intersecting plane. It is called the \emph{link} of the junction. If the sites are in general position (no $d+2$ of them belonging to a common sphere) then the link of a junction will only have three rays and three wedges, but we do not wish to assume general position.

We define the \emph{Delaunay graph} of a set of sites to be a graph having the sites as vertices, with edges connecting pairs of sites when their cells intersect in a wall. We do not include edges for pairs of sites whose cells have a lower-dimensional or empty intersection. For sites in general position, this graph forms the set of edges of a simplicial complex, the \emph{Delaunay triangulation}, but again we do not wish to make this general position assumption.

\subsection{Euclidean minimum spanning trees}

A Euclidean minimum spanning tree of a set $S$ of point sites is just a minimum spanning tree of a complete graph, having $S$ as its vertices, with edges weighted by Euclidean distance. We allow equal distances, in which case there may be more than one possible minimum spanning tree. If there are $n$ sites, a minimum spanning tree can be constructed easily by naive algorithms in time $O(n^2)$~\cite{Epp-HCG-00}: simply construct the weighted complete graph, and apply either Borůvka's algorithm or Jarník's algorithm (with an unordered list as priority queue), both of which take time $O(n^2)$ for dense graphs.

For points in $\mathbb{R}^2$, a standard and more efficient method of constructing a Euclidean minimum spanning tree is to construct the Delaunay triangulation of the points (perturbed if necessary to be in general position), which is guaranteed to contain a minimum spanning tree as a subgraph~\cite{ShaHoe-FOCS-75}, and then apply a planar graph minimum spanning tree algorithm to the resulting graph. This method provides no advantage for worst-case analysis in higher dimensions, as the Delaunay graph can be complete~\cite{Wat-CJ-81}, but we need the following related lemma for the correctness of \cref{alg:relevant}:

\begin{lemma}
\label{lem:mst-is-delaunay}
For an arbitrary finite set $S$ of sites in $\mathbb{R}^d$, any Euclidean minimum spanning tree of the sites is a subgraph of the Delaunay graph of the sites.
\end{lemma}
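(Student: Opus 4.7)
The plan is to fix an arbitrary edge $pq$ of the given Euclidean minimum spanning tree $T$, let $m$ be its midpoint, and let $H$ be the perpendicular bisector hyperplane of the segment $pq$. I will first show that no site $r\notin\{p,q\}$ satisfies $|rm|\le|pq|/2$. Once that is established, at the point $m$ every other site is strictly farther than $p$ and $q$ are, and by continuity the same remains true in a small neighborhood of $m$ inside $H$. Such a neighborhood is a $(d-1)$-dimensional piece of the intersection of the Voronoi cells of $p$ and $q$, so this intersection is a wall and $pq$ is an edge of the Delaunay graph as required.

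The heart of the argument combines the cut property of minimum spanning trees with the parallelogram (median) identity
\[
|pr|^2 + |qr|^2 \;=\; 2|mr|^2 + \tfrac{1}{2}|pq|^2,
\]
which follows immediately by expanding each side using $p-m=-(q-m)$. Suppose for contradiction that some site $r\notin\{p,q\}$ satisfies $|mr|\le|pq|/2$. The identity then yields $|pr|^2+|qr|^2\le|pq|^2$, which in turn forces $|pr|\le|pq|$ and $|qr|\le|pq|$. Equality in the first of these bounds would give $|qr|=0$, hence $r=q$, and equality in the second would give $r=p$; since $r\notin\{p,q\}$ by assumption, both inequalities are strict.

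Now delete $pq$ from $T$ to obtain two subtrees $T_p$ and $T_q$ containing $p$ and $q$ respectively. The site $r$ belongs to one of them, say $T_p$; then $qr$ is an edge that crosses the cut separating $T_p$ from $T_q$ and has strictly smaller weight than $pq$, so swapping $pq$ for $qr$ in $T$ produces a spanning tree of strictly smaller total weight, contradicting the minimality of $T$. This establishes the desired inequality $|rm|>|pq|/2$ for every other site and completes the proof plan. The one subtlety worth flagging, given that the statement is supposed to hold without any general-position assumption, is the borderline case $|rm|=|pq|/2$, where $r$ would lie exactly on the sphere with diameter $pq$; the parallelogram identity handles this case uniformly with the interior case, because equality in the resulting bound forces $r$ to coincide with $p$ or $q$.
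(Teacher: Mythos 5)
Your proof is correct and follows the same overall strategy as the paper: show that the midpoint $m$ of an MST edge $pq$ is strictly closer to $p$ and $q$ than to any other site, so that $m$ (or a nearby point on the bisector hyperplane) witnesses a Voronoi wall between their cells. The one place you diverge is in establishing strictness: the paper uses the triangle inequality to get $|ps|,|qs|<|pq|$ and then a somewhat loosely stated edge-exchange, while you use the parallelogram (median) identity together with the standard MST cut property. Your version is slightly more careful — the parallelogram identity handles the boundary case $|mr|=|pq|/2$ uniformly, and the cut argument (delete $pq$, note $r$ lies on one side, swap in the shorter crossing edge) avoids the case analysis implicit in the paper's ``replace $pq$ by $ps$ or $qs$, whichever is not already in the tree,'' which as stated doesn't immediately guarantee the result is a tree. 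Both are valid; yours is the more airtight rendering of the same idea.
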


\begin{proof}
Consider any minimum spanning tree edge $pq$, and let $r$ be the midpoint of the edge between sites $p$ and $q$. Then $r$ is equidistant from $p$ and $q$. If any other site $s$ were not strictly farther than $r$, then (by the triangle inequality) $ps$ and $qs$ would both be shorter than $pr$, allowing the tree to be made shorter by replacing edge $pq$ by $ps$ or $qs$ (whichever is not already in the tree). Because we are assuming the tree to be minimum, shortening it is not possible, so all other sites must be strictly farther from $r$.

Let $r'$ be obtained by perturbing $r$ by a small amount in any direction perpendicular to $pq$; this perturbation preserves the property that $r'$ is equidistant from $p$ and $q$ and farther from all other sites. Therefore, $r'$ lies on a wall between the cells for $p$ and $q$, and edge $pq$ is also an edge of the Delaunay graph.
\end{proof}

Instead, specialized higher-dimensional Euclidean minimum spanning tree algorithms proceed by reducing the problem to a collection of bichromatic closest pair problems, in which one must find the closest red--blue pair among a collection of red and blue points, combining the resulting pairs into a graph, and applying a graph minimum spanning tree algorithm to this graph. Several reductions from Euclidean minimum spanning trees to bichromatic closest pairs have been given but for the known time bounds for bichromatic closest pairs these reductions all have the same efficiency~\cite{AgaEdeSch-DCG-91,CalKos-SODA-93,KrzLevNil-NJC-99}. Based on this approach, the following results are known~\cite{AgaEdeSch-DCG-91}:\footnote{Agarwal et al.~\cite{AgaEdeSch-DCG-91} state the time bound for high-dimensional Euclidean minimum spanning trees as a randomized expected time bound, but in later related work such as \cite{AgaMat-Algo-95} they observe that the need for randomness can be eliminated using techniques from \cite{Mat-JCSS-95}.}

\begin{lemma}
A 3-dimensional Euclidean minimum spanning tree of $n$ points can be computed in randomized expected time $O\bigl((n\log n)^{4/3}\bigr)$. A $d$-dimensional Euclidean minimum spanning tree can be computed by a deterministic algorithm in time
\[
O\left(n^{2-\frac{2}{\lceil d/2\rceil+1}+\varepsilon}\right)
\]
for any $\varepsilon>0$.
\end{lemma}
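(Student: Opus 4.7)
The plan is to obtain this lemma as a direct consequence of the reduction, mentioned in the paragraph immediately preceding the statement, from Euclidean minimum spanning trees to bichromatic closest pair problems, combined with the known time bounds for bichromatic closest pair in each of the two cases.

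First, I would invoke one of the reductions of Agarwal, Edelsbrunner, Eriksson and Sharir (or the alternative reductions of Callahan--Kosaraju or Krznaric--Levcopoulos--Nilsson) to transform the Euclidean minimum spanning tree problem on $n$ points in $\mathbb{R}^d$ into a sequence of $O(\log^{O(1)} n)$ bichromatic closest pair instances of total size $O(n\log^{O(1)} n)$; the output of these instances is assembled into a sparse auxiliary graph that is guaranteed to contain the EMST as a subgraph, and a standard graph minimum spanning tree algorithm then extracts the EMST in linear time in the size of that graph. This reduction step contributes only polylogarithmic overhead to whatever running time the bichromatic closest pair subroutine achieves.

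Next I would plug in the two bichromatic closest pair bounds that correspond to the two cases of the lemma. For $d=3$, the relevant bound is the randomized expected $O((n\log n)^{4/3})$ time bound obtainable from Agarwal--Edelsbrunner--Schwarzkopf via a parametric search driven by halfspace emptiness queries in four-dimensional space. For general $d\ge 4$, the relevant bound is the deterministic
\[
O\!\left(n^{2-\frac{2}{\lceil d/2\rceil+1}+\varepsilon}\right)
\]
time bound for bichromatic closest pair, derived from the best known halfspace range searching data structures, which directly yields the claimed time bound after multiplication by polylogarithmic factors absorbed into the $n^{\varepsilon}$ slack. The remark that these algorithms can be made deterministic (rather than Las Vegas) by the derandomization techniques of Matou\v{s}ek is already acknowledged in the footnote.

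The only genuine subtlety, and the place where some care is needed, is ensuring that the overhead of the reduction to bichromatic closest pair really is absorbed into the stated bounds: the reductions produce multiple subproblems on weighted/coloured subsets of the input, and one needs that the sum of their sizes is $\tilde O(n)$ and that the per-instance running time is subadditive, so that the aggregate cost matches a single bichromatic closest pair instance of size $n$ up to the $n^{\varepsilon}$ (respectively $\log n$) slack already present. This is exactly what the cited reductions establish, so no new argument is required; I would simply state the reduction, cite the bichromatic closest pair bounds, and combine them.
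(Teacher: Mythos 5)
Your proposal unpacks exactly the strategy the paper itself indicates: the lemma is stated as a citation to Agarwal, Edelsbrunner, Schwarzkopf, and Welzl (with Callahan--Kosaraju and Krznaric--Levcopoulos--Nilsson as alternatives for the reduction), and the paper's preceding paragraph already summarizes the reduction from Euclidean minimum spanning trees to bichromatic closest pairs followed by a graph MST computation, which is precisely what you describe. The only slip is incidental: the relevant coauthors are Schwarzkopf and Welzl, not Eriksson and Sharir, but this does not affect the substance of the argument.
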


When the dimension is not constant, these methods fail to improve on the quadratic time of the naive algorithms. More strongly, for any $\varepsilon>0$, the strong exponential time hypothesis implies that closest pairs and therefore also Euclidean minimum spanning trees of dimension polylogarithmic in~$n$ (with the polylogarithm depending on~$\varepsilon$) cannot be found in time $O(n^{2-\varepsilon})$~\cite{KarMan-Comb-20}.

\subsection{Extreme points}

The \emph{extreme points} of a finite set $S$ of points are the vertices of its convex hull, or equivalently the points that are on the boundary of a halfspace in which all other points of $S$ are interior.\footnote{See, for instance, \cite{Gru-CP-03}, p.~35, where this equivalence is stated in the form that the extreme points and exposed points of a convex polytope coincide.} Testing whether a given point $p$ is extreme can be formulated as a linear programming feasibility problem in which we seek a vector $v$ for which $v\cdot p>v\cdot q$ for all other points $q$ in $S$. The point $p$ is extreme if and only if such a vector~$v$ exists.

The dimension $d$ equals the number of variables in this linear program. Therefore, when $d$ is bounded, we can apply algorithms for low-dimensional linear programming, which are strongly polynomial and take linear time when the dimension is bounded, with exponential or subexponential dependence on the dimension~\cite{Meg-JACM-84,DyeFri-MP-89,Sei-DCG-91,Cla-JACM-95,MatShaWel-Algo-96,Cha-TALG-18}. One particularly simple choice here is Seidel's algorithm, which considers the constraints of the given program in a random order, maintaining the optimal solution for the constraints seen so far, and when finding a violated constraint recurses within the subspace of one lower dimension in which that constraint is tight~\cite{Sei-DCG-91}.

Thus, one simple way of finding all of the extreme points is to apply this linear programming approach to each point individually. However, this can be improved by the following method:

\begin{algorithm}[simple algorithm for extreme points~\cite{Cla-FOCS-94,Cha-DCG-96a,OttSchSou-NJC-01}]
\label{alg:simple-extreme}
~\\[-3ex]
\begin{enumerate}
\item Initialize the set $E$ of extreme points to the empty set.
\item For each input point $p$, in an arbitrary order, perform the following steps:
\begin{enumerate}
\item Use the linear program outlined above to test to test whether $p$  is extreme for $E\cup\{p\}$ and find a vector $v$ whose dot product with $v$ exceeds the dot product with any point in $E$.
\item If $p$ is not extreme for $E\cup\{p\}$ ($v$ does not exist) go on to the next point in the outer loop.
\item Otherwise, find the training point $p'$ that maximizes $v\cdot p'$, breaking ties lexicographically by the coordinates of the training points, and add $p'$ to $E$.
\end{enumerate}
\end{enumerate}
\end{algorithm}
Each iteration of the outer loop that reaches step (c) identifies a new extreme point.
Thus, all extreme points are identified using $n$ linear programs of size $O(k)$, plus $k$ extreme-point searches of size $n$, in total time $O(kn)$ in bounded dimensions. We omit some details and any proof of correctness, for which see the references for this algorithm~\cite{Cla-FOCS-94,Cha-DCG-96a,OttSchSou-NJC-01}. This $O(kn)$ time bound can be further improved at the cost of greater algorithmic complexity. When the dimension is two or three, and $k$ of the $n$ given points are extreme, one can find all the extreme points in time $O(n\log k)$ using an output-sensitive algorithm for the convex hull, as (by Euler's polyhedral formula) the number of extreme points and convex hull complexity are always within a constant factor of each other~\cite{KirSei-SICOMP-86,ClaSho-DCG-89,ChaMat-CG-95,Cha-DCG-96b}. In higher dimensions, Chan~\cite{Cha-DCG-96b} gives a time bound for this problem of
\[
O\left(n(\log k)^{O(1)}+(nk)^{1-\frac{1}{\lfloor d/2\rfloor+1}}(\log n)^{O(1)}\right).
\]

\subsection{Inversion and polar duality}

An \emph{inversion} through a sphere $S$ in Euclidean space $\mathbb{R}^d$ is a transformation that maps points (other than the center of the circle) to other points. Each point and its transformed image lie on a common ray from the center of the circle, and the product of their distances from the center equals the squared radius of the circle. These transformations preserve cocircularity of the transformed points, but not other properties such as distances. \emph{Polarity} is a different kind of transformation, again defined by a sphere in $\mathbb{R}^d$, that associates points (other than the center of the sphere) with hyperplanes (not through the center of the sphere) and vice versa. The point associated with a hyperplane $H$ can be found by finding the point $p$ that belongs to $H$ and is nearest  to the center of the sphere, and then inverting $p$ through the sphere. Reversing these steps, the hyperplane associated with a point $p$ is the hyperplane through the inverted image of $p$, perpendicular to the line through $p$ and the center of the sphere. Both inversion and polarity are illustrated in \cref{fig:inverse-polar}.

\begin{figure}[t]
\includegraphics[width=\textwidth]{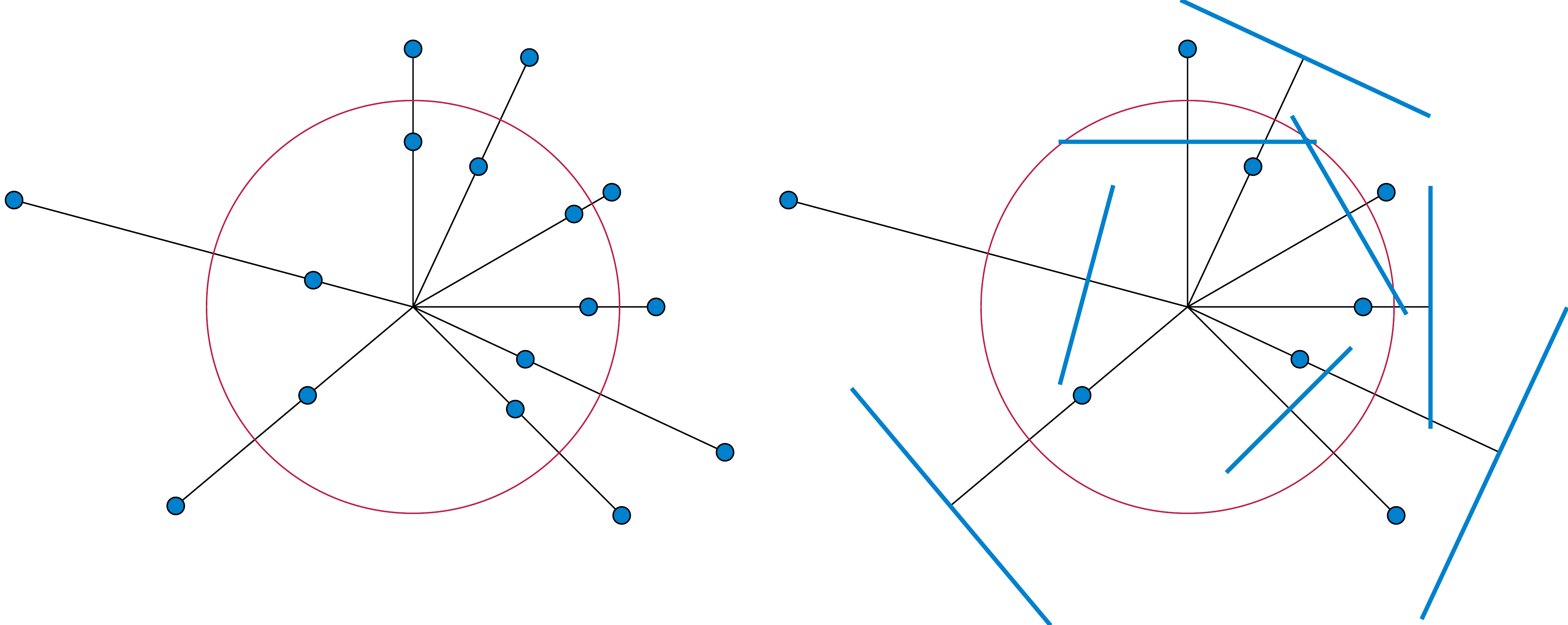}
\caption{Inverse pairs of points with respect to a circle (left), and polar pairs of points and lines with respect to a circle (right).}
\label{fig:inverse-polar}
\end{figure}

We can choose a Cartesian coordinate system for which the given sphere is the unit sphere centered at the origin. Under these coordinates, if $p$ is a point and $q$ is the polar image of a hyperplane $H$, then $p$ lies on $H$ if and only if $p\cdot q=1$. If the dot product is less than one, $p$ lies on the same side of $H$ as the origin, and if the dot product is greater than one, $p$ lies on the far side of $H$ from the origin. Because the dot product is a commutative operation, the operation of polarity (taking $p$ to a hyperplane and $H$ to the point $q$) preserves incidence and sidedness.

The inversions performed in our algorithm for finding relevant points can alternatively be thought of as polarities, with the inverted image of each site representing the polar image of the hyperplane equidistant between it and the chosen relevant point~$r$. We can formalize this intuition in the following lemma:

\begin{lemma}
\label{lem:extreme-wall}
Let $P=\{p_1,p_2,\dots\}$ be a set of points, let $r$ be a point not belonging to $P$,
and let $Q=\{q_1,q_2,\dots\}$ be the set of points obtained by inverting $P$ through a sphere centered at $r$, with corresponding indexes. Then $q_i$ is extreme in $Q\cup\{r\}$ if and only if the cells for $p_i$ and $r$ share a wall in the Voronoi diagram of $P\cup\{r\}$.
\end{lemma}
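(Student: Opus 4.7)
The plan is to work out both sides of the iff in a common coordinate system where $r$ is placed at the origin and the inversion sphere is the unit sphere, so that $q_i = p_i / \|p_i\|^2$. The main task is to identify both ``$q_i$ is extreme'' and ``$p_i,r$ share a wall'' as two sides of a single instance of polar duality.

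First I would compute the perpendicular bisector of $r$ and $p_i$ in these coordinates: it is $H_i = \{x : x\cdot p_i = \|p_i\|^2/2\}$, which rewrites using $q_i = p_i/\|p_i\|^2$ as $H_i = \{x : 2x\cdot q_i = 1\}$. Since $r=0$ trivially satisfies $2\cdot 0 \cdot q_i = 0 < 1$, the Voronoi cell of $r$ in $\mathrm{Vor}(P\cup\{r\})$ is
\[
C_r = \{x \in \mathbb{R}^d : 2 x \cdot q_i \le 1 \text{ for all } i\},
\]
and $r=0$ lies in its interior. The wall between the cells of $r$ and $p_i$ exists exactly when the halfspace $2x\cdot q_i\le 1$ contributes a $(d-1)$-dimensional face $C_r \cap H_i$ to $C_r$, i.e.\ when this constraint is facet-defining.

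Next I would recognize $C_r$ as the polar dual of $K = \operatorname{conv}(Q \cup \{r\})$, using polarity with respect to the sphere of radius $1/\sqrt{2}$ at the origin (so polar of a point $q$ is the halfspace $\{x : 2x\cdot q\le 1\}$). The constraint contributed by $r=0$ is the whole space, so dropping it is harmless, giving exactly $C_r = K^{\circ}$. Because $r$ lies in the interior of $C_r$ (equivalently, $0$ lies in the interior of $K^\circ$, ensuring $K$ is bounded and polar duality is nondegenerate), the standard face-lattice reversal applies: vertices of $K$ correspond bijectively to facets of $K^\circ$, and specifically the vertex $q_i$ corresponds to the facet $C_r \cap \{x : 2x\cdot q_i = 1\} = C_r \cap H_i$.

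Putting the two pieces together: $q_i$ is an extreme point of $Q\cup\{r\}$ iff it is a vertex of $K$, iff $C_r\cap H_i$ is a facet of $C_r$, iff this intersection is $(d-1)$-dimensional, iff the cells of $r$ and $p_i$ share a wall in the Voronoi diagram of $P\cup\{r\}$. The only delicate point I expect is justifying the polar-duality correspondence without a general-position assumption, since cells of Voronoi diagrams need not be simple polytopes; but the facets-to-vertices correspondence for polar duals of arbitrary convex polytopes containing the origin in their interiors is standard and needs no genericity, so the argument goes through unchanged.
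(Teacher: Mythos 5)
Your proof takes the same essential route as the paper's: both treat the inversion as a polarity, and your identification of the Voronoi cell $C_r$ with the polar dual $K^\circ$ of $K=\operatorname{conv}(Q\cup\{r\})$ is a clean way to package what the paper does more by hand (the paper fixes a convenient radius so the polar of $q_i$ is the bisector of $r$ and $p_i$, and then maps a strictly supporting hyperplane at $q_i$ to a witness point $h$ in the relative interior of the shared wall, and conversely). However, your justification of the final step has a genuine gap. The standard face-lattice anti-isomorphism between $K$ and $K^\circ$ requires $0$ to lie in the interior of $K$, not of $K^\circ$ as you wrote, and these are not interchangeable. The condition you actually check ($r$ interior to $C_r$, i.e.\ $0$ interior to $K^\circ$) holds automatically whenever $K$ is a bounded set containing $0$ -- and $K$ is always bounded here, being the convex hull of finitely many points -- so it carries no information. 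The condition you need, $0\in\operatorname{int}(K)$, is equivalent (via $C_r=K^\circ$) to $C_r$ being bounded, and this fails exactly when $r$ lies on the convex hull of $P\cup\{r\}$. In that case $K^\circ$ is an unbounded polyhedron and the theorem you cite does not apply verbatim.

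The fix is easy and brings you back to the paper's argument. You only need the biconditional for the points $q_i\ne 0$, and for those it holds regardless of where $0$ sits in $K$: if $q_i$ is a vertex, a hyperplane $\{y:2z\cdot y=1\}$ strictly supporting $K$ at $q_i$ (strict at all other points of $Q\cup\{0\}$) gives a point $z\in K^\circ$ that is tight only on the $i$-th constraint, so that constraint is facet-defining; conversely, a relative-interior point of that facet is equidistant from $r$ and $p_i$ and strictly closer to both than to any other $p_j$, and its polar hyperplane strictly supports $K$ at $q_i$. This direct two-way witness argument is precisely what the paper does with its point $h$; substituting it for the appeal to the general face-lattice theorem closes the gap without changing the rest of your argument.
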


\begin{proof}
Consider any particular point $q_i$, for which we wish to prove the statement of the lemma. Scaling the radius of the sphere of inversion scales $Q\cup\{r\}$ but does not change the property of being extreme, so the choice of radius is irrelevant to the truth of the lemma. Therefore, without loss of generality, we may assume that the radius is such that the polar hyperplane of $q_i$ (for a different sphere of unit radius centered at $r$) is exactly the wall $W_i$ of the two-point Voronoi diagram of $\{r,p_i\}$.

If $q_i$ is extreme, there is a hyperplane $H$ passing through it, with all of the other points in $Q\cup\{r\}$ on a single side of $H$. Let $h$ be the polar of $H$; then $h$ lies on $W_i$, so it is equidistant from $r$ and $p_i$. By the preservation of sidedness of polarity, $h$ lies on the same side as $r$ of each wall $W_j$ for $j\ne i$, so it is farther from all the other points $p_j$ than it is from $r$ and from $p_i$. Because it is equidistant from $r$ and $p_i$ and farther from all the other sites of the Voronoi diagram, it witnesses the existence of a Voronoi wall between $r$ and $p_i$ in the full Voronoi diagram.

In the opposite direction, if the cells for $p_i$ and $r$ share a wall in the Voronoi diagram of $P\cup\{r\}$, let $h$ be a point in the relative interior of that wall, and let $H$ by the hyperplane polar to $h$. Because $h$ is equidistant from $r$ and $p_i$, and farther from all of the other points $p_j$, $h$ is on the same side as $r$ of each wall $W_j$. By the preservation of sidedness of polarity, point $q_i$ lies on $H$, with all other points $q_j$ on the same side as $r$ of $H$, so $H$ witnesses the fact that $q_i$ is extreme in $Q\cup\{r\}$.
\end{proof}

\subsection{Binary homology}

\begin{figure}[t]
\centering
\includegraphics[width=0.8\textwidth]{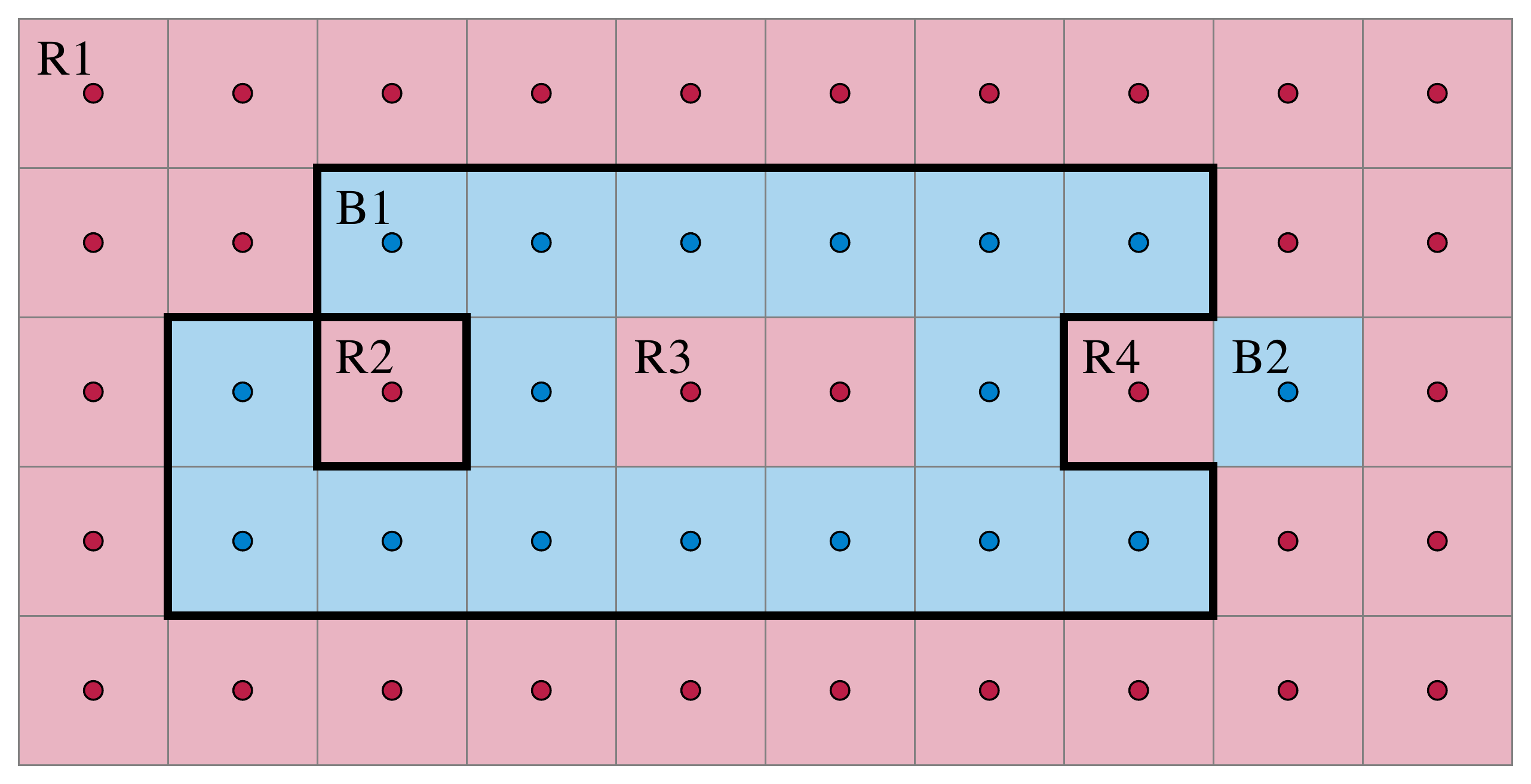}
\caption{The Voronoi diagram of a training set in the form of a $5\times 10$ grid, with its six decision components (R1, R2, R3, R4, B1, and B2) labeled in their upper left Voronoi cells. Decision component B1 has two decision boundary components: the one highlighted, between B1, R1, R2, and R4, and a second boundary component between B1 and R3. The highlighted decision boundary component is itself a boundary, of B1${}\cup{}$R3 or of its complement.}
\label{fig:decision-boundaries}
\end{figure}

Although \cref{alg:relevant} itself is ignorant of topology, we need some basics for its correctness proof. Specifically, we use mod-2 homology, as described e.g. by \cite{Hen-94}. This theory applies to a wide class of cell complexes, but we do not need to define this class carefully, because we will only apply this form of homology to the finite convex subdivisions of $\mathbb{R}^d$ obtained as Voronoi diagrams of finite point sets.

If $S$ is any set, the family of all subsets of $S$ forms a $\mathbb{Z}_2$-vector space with the symmetric difference of sets as its vector addition operation. Subsets of the $k$-dimensional faces of a polyhedral decomposition of space (such as a Voronoi diagram) are called $k$-chains. For each $k$, the $k$-chains can be mapped to $(k-1)$-chains by the \emph{boundary map} $\partial_k$, which takes a single $k$-face to the $(k-1)$-faces on its boundary, and acts linearly on $k$-chains: if $C$ is a $k$-chain, $\partial_k(C)$ is the $(k-1)$-chain consisting of $(k-1)$-faces that occur an odd number of times on the boundary of $k$-faces in $C$. It follows directly from its action on convex polytopes that, for any $C$, $\partial_{k-1}\bigl(\partial_k(C)\bigr)=\emptyset$: the boundary of a boundary is empty. Conversely, if $B$ is a $(k-1)$-chain with empty boundary -- that is, if $\partial_{k-1}(B)=\emptyset$ -- then $B$ is itself a boundary: there exists $C$ such that $B=\partial_k(C)$. In other spaces than $\mathbb{R}^d$, there can exist chains with empty boundary that are not themselves boundaries, corresponding to nontrivial elements of the homology groups of the space, but $\mathbb{R}^d$ has trivial homology so such chains do not exist.

For a set of $d$-dimensional training points with classifications, we consider a subgraph of the Delaunay graph consisting of the edges connecting cells with the same classification as each other, and we define a \emph{decision component} to be a connected component of this subgraph; see \cref{fig:decision-boundaries}. As a set of $d$-dimensional Voronoi cells, it can be considered as a $k$-chain.
If $C$ is a decision component, form a graph $B_C$ whose vertices are the Voronoi walls of $\partial_k(C)$, with two vertices adjacent when the two walls they represent meet in a junction. We define a \emph{decision boundary component} to be a connected component of $B_C$ for any decision component $C$. Because it represents a set of Voronoi walls, it can be considered as a $(k-1)$-chain.

\begin{lemma}
\label{lem:boundaries-separate}
Every decision boundary component is the boundary of a set of Voronoi cells.
\end{lemma}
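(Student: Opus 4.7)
\bigskip

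\noindent\textbf{Proof plan for \cref{lem:boundaries-separate}.}

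The plan is to show that any decision boundary component $B$, viewed as a $(d-1)$-chain, is a cycle, i.e.\ $\partial_{d-1}(B)=\emptyset$. Once this is established, the triviality of mod-$2$ homology of $\mathbb{R}^d$ (stated in the paragraph preceding the lemma) immediately yields a $d$-chain $C'$ with $\partial_d(C')=B$, which is what the lemma asserts.

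The central step is to show that at every junction $j$ of the Voronoi diagram, the number of walls of $B$ incident to $j$ is even. First I would unpack what it means for a wall $w$ to belong to $\partial_d(C)$: since each wall lies on the boundary of exactly two $d$-cells, $w$ lies in $\partial_d(C)$ iff exactly one of those two cells belongs to the decision component $C$. Next, fix a junction $j$ and look at its link: the cells containing $j$ appear as a cyclic sequence of wedges separated by rays corresponding to the walls containing $j$. Coloring each wedge by whether its cell belongs to $C$ or not, the walls of $\partial_d(C)$ incident to $j$ are precisely the rays at which the color changes. Going once around the cycle, the number of color changes is even, so an even number of walls of $\partial_d(C)$ meet $j$.

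The subtlety is that $B$ is only one connected component of $B_C$, so a priori the walls of $\partial_d(C)$ meeting $j$ could split between $B$ and other components of $B_C$ in an odd fashion. Here I use the definition of $B_C$: any two walls of $\partial_d(C)$ that share a junction are adjacent in $B_C$, so all walls of $\partial_d(C)$ incident to $j$ form a clique, and in particular they all belong to the same connected component of $B_C$. Therefore the walls of $B$ incident to $j$ are either all of the walls of $\partial_d(C)$ at $j$, or none of them. In the first case the count is even by the link argument above; in the second case it is $0$. Either way $j$ appears an even number of times in $\partial_{d-1}(B)$, so $\partial_{d-1}(B)=\emptyset$, and the triviality of mod-$2$ homology of $\mathbb{R}^d$ finishes the proof.

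The main obstacle I anticipate is the link argument at a junction when sites are not in general position: the link can have more than three rays and wedges, so one must argue purely combinatorially from the cyclic wedge structure described in the preliminaries rather than from any simplicial picture. This is exactly where the earlier remark that the link geometry does not depend on the choice of intersecting plane is used.
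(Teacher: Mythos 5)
Your proof is correct and takes essentially the same route as the paper's: show $\partial_{d-1}(D)=\emptyset$ by verifying that $D$ touches each junction an even number of times, then invoke the triviality of mod-$2$ homology of $\mathbb{R}^d$ to obtain a $d$-chain with boundary $D$. The clique observation you make explicit (all walls of $\partial_d(C)$ through a fixed junction are pairwise adjacent in $B_C$, hence lie in a single component, so $D$ contains either all or none of them) is the key detail behind the paper's more compressed phrasing; the paper also cites $\partial_{d-1}\circ\partial_d=\emptyset$ directly rather than rederiving evenness from the link picture, but both routes are sound.
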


\begin{proof}
Because $\partial_{k-1}\bigl(\partial_k(C)\bigr)=\emptyset$, and a set of walls has an empty boundary if and only if they touch each junction an even number of times, it follows that all vertices in $B_C$ have even degree, and therefore that the same thing is true in every decision boundary component. Therefore, if $D$ is a decision boundary component, $\partial_{k-1}(D)=\emptyset$. It follows from the triviality of the homology of $\mathbb{R}^d$ that there exists a $k$-chain $E$ such that $\partial_k(E)=D$.
\end{proof}

If $D$ is a decision boundary component, we define a \emph{side} of $D$ to be a set of Voronoi cells having $D$ as its boundary. By \cref{lem:boundaries-separate}, every decision boundary component has at least one side. (Actually, at least two, because the complement of a side is another side.)

\section{Correctness}

\subsection{Only relevant points are found}

\cref{alg:relevant} adds points to its set $R$ of relevant points in two ways: by finding endpoints of minimum spanning tree edges and by finding extreme points of inverted point sets. We prove in this section that in both cases the points that are found are truly relevant: there is a nearest-neighbor classification query that would be answered incorrectly if any one of these points were omitted.

\begin{lemma}
\label{lem:mst-endpoints-are-relevant}
If $p$ is an endpoint of an edge of a minimum spanning tree for which the other endpoint $p'$ has a different classification, then $p$ is relevant with respect to the whole data set, and also relevant with respect to any subset of the data set that includes both $p$ and $p'$.
\end{lemma}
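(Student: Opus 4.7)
The plan is to exhibit an explicit query point near the midpoint of $pp'$ whose nearest-neighbor classification is determined by $p$ alone and flips to the classification of $p'$ when $p$ is removed. The key ingredient is already inside the proof of \cref{lem:mst-is-delaunay}: the midpoint $r$ of the MST edge $pp'$ is equidistant from $p$ and $p'$, and by the minimality of the spanning tree, every other site $s$ in the data set satisfies $|rs|>|rp|=|rp'|$ strictly, since otherwise we could shorten the tree by replacing $pp'$ with $ps$ or $p's$.

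From this I would perturb $r$ by a tiny amount $\varepsilon>0$ toward $p$ along the segment $pp'$, producing a query point $x$. For small enough $\varepsilon$, continuity guarantees simultaneously that $|xp|<|xp'|$ and that $|xp|<|xs|$ for every other site $s$; the perturbation breaks the tie strictly in favor of $p$ while preserving the strict inequalities against all other sites that were already in place at $r$. Hence $p$ is the unique nearest neighbor of $x$ in the full training set, so $x$ inherits the classification of $p$. Deleting $p$ from the training set then leaves $p'$ as the unique nearest neighbor of $x$ (again because $p'$ was already strictly closer to $x$ than every other site), so $x$'s classification changes to that of $p'$, which by hypothesis is different. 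This shows $p$ is relevant with respect to the entire data set.

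For the relative version, observe that the whole argument only used the inequalities $|xp|<|xs|$ and $|xp'|<|xs|$ for sites $s\neq p,p'$ in the training set. If $S$ is any subset of the training set that still contains both $p$ and $p'$, the set of ``other sites'' only shrinks, so those same strict inequalities continue to hold on $S$; the same point $x$ witnesses that removing $p$ from $S$ flips its classification. I do not expect any real obstacle: the only subtle point is that we must perturb off the bisector to avoid the tie between $p$ and $p'$, and a standard continuity argument handles this once the strict separation from all other sites is in hand.
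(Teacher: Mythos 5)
Your proposal is correct and follows essentially the same approach as the paper: both arguments identify a point on the bisector of $pp'$ (you use the midpoint directly, the paper invokes \cref{lem:mst-is-delaunay} to get a point interior to the shared Voronoi wall) that is strictly closer to $p$ and $p'$ than to every other site, and then perturb it slightly toward $p$ to exhibit a query whose classification flips when $p$ is deleted. The only cosmetic difference is that you re-derive the key strict-distance fact from the MST exchange argument inline rather than citing the lemma's conclusion, which you yourself note is ``already inside the proof'' of \cref{lem:mst-is-delaunay}.
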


\begin{proof}
By \cref{lem:mst-is-delaunay}, there is a Voronoi wall between the cells for $p$ and $p'$. Let $q$ be any point interior to that wall, and perturb $q$ to a point $q'$ closer to $p$ than to $p'$, by a perturbation sufficiently small that it does not touch or cross any other Voronoi walls. Then the correct nearest neighbor classification of $q'$ is the same as that for $p$, but if $p$ were omitted then the classification would become the same as for $p'$, a different value. Therefore $p$ is relevant.
\end{proof}

\begin{lemma}
\label{lem:extreme-points-are-relevant}
If $r$ is relevant for the whole data set, $p$ has a different classification than $r$, and $p$ corresponds to one of the extreme points of the set $S_r$ constructed by \cref{alg:relevant}, then $p$ is relevant with respect to the whole data set, and also relevant with respect to any subset of the data set that includes both $r$ and $p$.
\end{lemma}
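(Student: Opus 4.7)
The plan is to apply \cref{lem:extreme-wall} to convert the extremity of $p$ in $S_r$ into a Voronoi statement for a reduced training set, and then to lift that statement to the Voronoi diagram of the entire data set. I would write $C_r$ for the classification of $r$ and let $P$ be the set of training points whose classification differs from $C_r$, so that $p \in P$. Then \cref{lem:extreme-wall} says that the inverted image of $p$ is extreme in $S_r$ if and only if the cells of $p$ and $r$ share a wall in the Voronoi diagram of $P \cup \{r\}$, and I would fix a point $h$ in the relative interior of this wall.

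The first substantive step is the observation that for every $x \in P \setminus \{p\}$ the open halfspace of points strictly closer to $p$ than to $x$ contains both $p$ (trivially) and $h$ (by the choice of $h$ on the reduced wall), so by convexity it contains the entire segment from $p$ to $h$. This makes the nearest training point to any point on that segment, now taken over the \emph{full} data set, either $p$ itself or some point of class $C_r$: no other non-$C_r$ training point can ever be as close as $p$.

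The second step is a case split that I would make clean by a generic choice of $h$ inside the relative interior of the reduced wall, avoiding the measure-zero subset on which some point of class $C_r$ is equidistant from $h$ and $p$. In one case every $C_r$-classified site is strictly farther from $h$ than $p$ is; then $h$ lies in the relative interior of a wall between the cells of $p$ and $r$ in the full Voronoi diagram, and $r$ itself is a neighbour of $p$ of different classification. In the other case some $C_r$-classified site is strictly closer to $h$ than $p$, so the segment from $p$ to $h$ must exit the full-diagram cell of $p$ at some first time; by the previous paragraph it must enter a cell whose site has class $C_r$, and a further generic perturbation ensures this exit happens across a $(d-1)$-dimensional wall rather than a lower-dimensional face. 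Either way the cell of $p$ in the full Voronoi diagram has a wall with a cell of different classification, which is precisely the condition for $p$ to be relevant. I would then note that the same argument works verbatim for any subset $T' \subseteq T$ that contains $r$ and $p$: any hyperplane exposing the inverted image of $p$ in the larger $S_r$ still exposes it in the smaller one, so extremity is inherited by subsets, and \cref{lem:extreme-wall} is used identically.

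I expect the main obstacle to be making the generic-position choices honest without assuming any general-position hypothesis on the input. I need the chosen $h$ to avoid both a finite union of $(d-2)$-dimensional loci inside the reduced wall on which some $C_r$-classified site ties $p$ in distance, and the lower-dimensional set of $h$ for which the segment from $p$ to $h$ first meets the boundary of $p$'s full-diagram cell at a face of dimension less than $d-1$. Verifying that the set of admissible $h$ is dense, indeed of full measure, in the $(d-1)$-dimensional relative interior of the reduced wall is the one place in the argument where real care is required; the relevance of $r$ is not actually needed for the proof, only for justifying that the algorithm reaches this step.
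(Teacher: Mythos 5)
Your proposal is correct and follows essentially the same route as the paper: apply \cref{lem:extreme-wall} to obtain a wall between $p$ and $r$ in the Voronoi diagram of the oppositely-classified points plus $r$, pick a generic point $h$ in its relative interior, note that $p$ dominates all other oppositely-classified sites along segment $ph$, and then find (via your first-exit / case-split argument, which plays the same role as the paper's intermediate-value-theorem step) a full-diagram Voronoi wall separating $p$ from a site of $r$'s class. Your side remarks are also sound: the relevance of $r$ is indeed unused in the argument itself, and extremity being inherited by subsets is the right way to make the "any subset containing $r$ and $p$" clause explicit.
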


\begin{proof}
By \cref{lem:extreme-wall}, $r$ and $p$ share a wall in the Voronoi diagram of $S_r$. Let $w$ be a point in the relative interior of this wall, chosen in general position so that, in the Voronoi diagram of the whole training set, line segment $pw$ does not cross any faces of lower dimension than a wall. With this choice, $w$ is equidistant from $r$ and $p$, and farther from all other points of different classification to $r$.

Within line segment $pw$, each point of the line segment has $p$ as the nearest neighbor among points with a different classification to $r$; however, some points of the line segment may have an even closer neighbor that has the same classification as~$r$. (For instance, this happens for the leftmost red Voronoi neighbor of $r$ in \cref{fig:steps}.) Let $x$ be a point of line segment $pw$ that is equidistant from $p$ and from the nearest training point to $x$ with the same classification as $r$. The existence of $x$ can be seen from the intermediate value theorem, noting that at one endpoint of this segment, $p$, $p$ is closer than any other training point, while at the other endpoint, $w$, the nearest training point with the same classification as $r$ is at equal or closer distance than $p$. Because of our choice of $w$ as being in general position, no other training point can be as close to $x$ as these two points. Then $x$ lies on a Voronoi wall between $p$ and a point with the same classification as $r$, so $p$ is relevant by the same perturbation argument as in the proof of \cref{lem:mst-endpoints-are-relevant}.
\end{proof}

\begin{lemma}
\label{lem:all-are-relevant}
All points identified as relevant by \cref{alg:relevant} are relevant, both with respect to the whole data set and with respect to the subset of points identified by the algorithm.
\end{lemma}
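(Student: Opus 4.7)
The plan is to argue by induction on the order in which points are added to the set $R$ during \cref{alg:relevant}. The base case handles the points added in step 2 (endpoints of bichromatic minimum spanning tree edges), while the inductive step handles the points added within the loop of step 3 (extreme points of some inverted set $S_r$).

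For the base case, each point $p$ added in step 2 is an endpoint of an MST edge whose other endpoint $p'$ has a different classification. \cref{lem:mst-endpoints-are-relevant} directly gives that $p$ is relevant with respect to the whole data set, and also with respect to any subset containing both $p$ and $p'$. Since step 2 adds \emph{both} endpoints of each bichromatic MST edge, $R$ itself contains both $p$ and $p'$, so $p$ is relevant with respect to $R$ as well.

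For the inductive step, suppose $p$ is added to $R$ in step 3 because it corresponds to an extreme point of $S_r$ for some $r\in R$ that was processed earlier, and that $p$ has a different classification from $r$. By the induction hypothesis, $r$ is relevant with respect to the whole data set. Applying \cref{lem:extreme-points-are-relevant} with this $r$ then shows that $p$ is relevant with respect to the whole data set; and since $r$ and $p$ are both placed in $R$ by the algorithm, the same lemma also gives relevance of $p$ with respect to any subset containing $r$ and $p$, in particular with respect to $R$.

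The only subtle point is ensuring that the "witness" $r$ used by \cref{lem:extreme-points-are-relevant} is itself present in $R$; this is built into the algorithm, since $r$ is added to $R$ before the loop iteration that creates $p$. Thus the induction goes through, and every point identified by the algorithm is relevant with respect to both the full training set and its thinned subset $R$. I expect no serious obstacle: once the two preparatory lemmas are phrased to pass relevance through to any subset containing the required witness, the proof is essentially a bookkeeping induction on the order of insertions into $R$.
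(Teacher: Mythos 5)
Your proof is correct and matches the paper's own argument: an induction on the order of insertions into $R$, with \cref{lem:mst-endpoints-are-relevant} handling the base case (MST endpoints) and \cref{lem:extreme-points-are-relevant} handling the inductive step (extreme points of $S_r$). You spell out the bookkeeping, including that the witness $r$ precedes $p$ in $R$, in more detail than the paper, but the route is the same.
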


\begin{proof}
This follows by induction on the number of iterations of the outer loop of \cref{alg:relevant}, using \cref{lem:mst-endpoints-are-relevant} as the base case and \cref{lem:extreme-points-are-relevant} for the induction step.
\end{proof}

\subsection{All relevant points are found}

If $D$ is a decision boundary component, and $w$ is a wall in $D$, between the Voronoi cells for points $p$ and $q$, we say that $p$ and $q$ are \emph{defining points} of $D$.

\begin{lemma}
\label{lem:at-least-one}
For every decision boundary component $D$, \cref{alg:relevant} identifies at least one defining point of $D$ as relevant.
\end{lemma}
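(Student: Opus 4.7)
The plan is to use the structure given by Lemma~\ref{lem:boundaries-separate} together with Lemma~\ref{lem:mst-is-delaunay} to show that the MST step of the algorithm already produces a defining point of $D$; the extreme-point step is not needed for this direction. Concretely, I would invoke Lemma~\ref{lem:boundaries-separate} to obtain a set $E$ of Voronoi cells with $\partial_k(E)=D$. Partition the training points into $S$, the sites whose cells lie in $E$, and $S'$, the remaining sites. Since $D$ contains at least one wall, each side of that wall witnesses that both $S$ and $S'$ are nonempty.

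Next I would use a standard spanning-tree fact: because $T$ connects every training point, it must contain at least one edge $pq$ with $p\in S$ and $q\in S'$. By Lemma~\ref{lem:mst-is-delaunay}, $pq$ is a Delaunay edge, so the cells of $p$ and $q$ share a Voronoi wall $w$. Since $w$ separates a cell in $E$ from a cell outside $E$, it contributes to $\partial_k(E)$, and hence $w\in D$. This already shows that $p$ and $q$ are defining points of $D$.

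It remains to verify that $pq$ is actually a bichromatic MST edge, so that the algorithm inserts its endpoints into $R$ at step~2. Here I would unfold the definition of $D$: by construction, $D$ is a connected component of $B_C$ for some decision component $C$, so every wall of $D$ lies in $\partial_k(C)$. A wall in $\partial_k(C)$ separates a cell of $C$, whose site has the common classification of $C$, from a cell outside $C$, whose site has a different classification (otherwise the Delaunay edge would have extended $C$). Applying this to $w$, the sites $p$ and $q$ carry different classifications, so $pq$ is bichromatic and both $p$ and $q$ are added to $R$ in step~2 of Algorithm~\ref{alg:relevant}.

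I expect the only subtle point to be the last paragraph: being careful that $D\subseteq \partial_k(C)$ forces the two sites on opposite sides of $w$ to have different classifications, which hinges on the maximality of the decision component $C$. Everything else is routine manipulation of the chain-boundary relation and of spanning-tree cuts.
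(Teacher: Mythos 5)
Your proof is correct and follows essentially the same route as the paper: take a side $E$ of $D$ from Lemma~\ref{lem:boundaries-separate}, use the spanning-tree cut property to find an MST edge crossing between $E$ and its complement, invoke Lemma~\ref{lem:mst-is-delaunay} to turn that edge into a Voronoi wall of $D$, and observe that the wall is bichromatic by maximality of the decision component. The only difference is that you spell out a few steps the paper leaves implicit (nonemptiness of both sides of the cut, the explicit appeal to Lemma~\ref{lem:mst-is-delaunay}, and the maximality argument for why $w$ is bichromatic), which is harmless gap-filling rather than a different argument.
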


\begin{proof}
By \cref{lem:boundaries-separate}, $D$ has a side $E$. Because the minimum spanning tree connects all the Voronoi cells, it includes at least one edge $e$ that connects a cell in $E$ to another cell not in $E$. The wall between these two cells is part of the boundary of $E$, so it belongs to $D$, and its endpoints are defining points of $D$. Because this wall belongs to a decision boundary component, it separates two cells of different classifications, so the endpoints of $e$ will be identified as relevant by the phase of \cref{alg:relevant} that finds endpoints of minimum spanning tree edges whose endpoints have different classifications.
\end{proof}

Once we have identified at least one defining point of a decision boundary component, the second phase of \cref{alg:relevant} finds all of them, as the following lemmas show.

\begin{lemma}
\label{lem:both-defining}
If \cref{alg:relevant} identifies one of the two defining points of a wall of a decision boundary component as relevant, it identifies the other defining point of the same wall.
\end{lemma}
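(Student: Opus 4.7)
The plan is to apply \cref{lem:extreme-wall} after a short monotonicity argument about Voronoi walls under restriction to a subset of the sites. Let $W$ be the wall of the decision boundary component under consideration, and let $p,q$ be its two defining points; because $W$ separates Voronoi cells belonging to a decision boundary component, $p$ and $q$ have different classifications. Suppose the algorithm has identified $p$ as relevant. Then during the outer-loop iteration that processes $p$, it constructs $S_p$ by inverting through a unit sphere at $p$ every training point of classification different from $p$, and in particular it inverts $q$ to some point $q'\in S_p$. It suffices to prove that $q'$ is an extreme point of $S_p$, because then step~3(c) inserts $q$ into $R$.

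By \cref{lem:extreme-wall}, applied with $r=p$ and with $P$ the set of training points whose classification differs from $p$, the point $q'$ is extreme in $S_p$ if and only if $p$ and $q$ share a wall in the Voronoi diagram of the restricted site set $P\cup\{p\}$. The main thing to verify is therefore that the wall $W$ between $p$ and $q$ in the full training-set diagram survives passage to this smaller diagram. I would do this by choosing any point $w$ in the relative interior of $W$: by definition of a Voronoi wall, $w$ is equidistant from $p$ and $q$ and strictly closer to them than to any other training point, hence a fortiori strictly closer to them than to any site of $P\cup\{p\}\setminus\{p,q\}$. Perturbing $w$ slightly within the bisector hyperplane of $p$ and $q$ preserves these strict inequalities, so $w$ has a $(d-1)$-dimensional neighborhood within that bisector consisting of points for which $p$ and $q$ are the unique nearest sites in $P\cup\{p\}$. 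Consequently the intersection of the Voronoi cells of $p$ and $q$ in the restricted diagram is $(d-1)$-dimensional, i.e., a wall.

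Once the restricted wall is in hand, the forward direction of \cref{lem:extreme-wall} immediately yields that $q'$ is extreme in $S_p$, and the algorithm therefore adds $q$ to $R$ while processing $p$. I do not anticipate any real obstacle here beyond the bookkeeping just sketched; the nontrivial geometric content has already been packaged into \cref{lem:extreme-wall}, and the monotonicity of Voronoi walls under removal of sites is standard and requires only the triangle-inequality style reasoning above.
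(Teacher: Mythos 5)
Your proof is correct and follows essentially the same route as the paper's: both reduce to showing that the wall between $p$ and $q$ persists in the Voronoi diagram of $\{p\}$ together with the training points of opposite classification, and then apply \cref{lem:extreme-wall}. The paper states this persistence as an unargued one-line consequence of the fact that $p$ and $q$ are Delaunay neighbors in the full diagram; you spell out the monotonicity explicitly with the perturbation-within-the-bisector argument, which is a reasonable amount of added detail but not a different approach.
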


\begin{proof}
Let $p$ be the first of the two defining points to be identified, and $q$ be the other defining point for the same wall. Because they are separated by a decision boundary component, they have different classifications. Then $q$ is a neighbor of $p$ in the Delaunay graph of all of the training points, and therefore also in the Delaunay graph of $p$ and the subset of training points with different classifications to $p$. Therefore, by \cref{lem:extreme-wall}, $q$ will be found as the inverted image of one of the extreme points in $S_p$.
\end{proof}

\begin{lemma}
\label{lem:consecutive-walls}
Let $w$ and $w'$ be walls sharing a junction, such that both $w$ and $w'$ separate cells with different classifications and such that all cells between them (within one of the two angles that they form at their shared junction) have a single classification. Suppose also that \cref{alg:relevant} identifies one of the defining points of $w$ as relevant. Then it also identifies one of the defining points of $w'$ as relevant.
\end{lemma}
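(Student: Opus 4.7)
The plan is to combine \cref{lem:both-defining} with a local analysis at the shared junction $j$. Let $A$ denote the common classification of the cells in the chosen angle at $j$, and write the defining points of $w$ as $p$ (whose cell lies in the angle, hence of class $A$) and $p'$ (of class $B \ne A$), and the defining points of $w'$ analogously as $q$ (class $A$) and $q'$ (class $C \ne A$). By \cref{lem:both-defining}, both $p$ and $p'$ are identified. If $p = q$ or $p' = q'$ then a defining point of $w'$ is already in the relevant set $R$ and we are done; otherwise the remaining coincidences $p = q'$ and $p' = q$ are ruled out by classification, so $\{p, p'\}$ and $\{q, q'\}$ are disjoint, and I proceed by showing that running the extreme-point phase of \cref{alg:relevant} at $r = p$ identifies $q'$.

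By \cref{lem:extreme-wall}, this reduces to showing that $p$ and $q'$ share a Voronoi wall in the point set $\{p\} \cup P$, where $P$ is the set of training points of class different from $A$. To establish this I plan to examine the link of $j$: in a 2D plane perpendicular to the affine hull of $j$ at a generic interior point, the local Voronoi structure consists of wedges emanating from $j$, one per link site. In the original link the wedges within the chosen angle, in angular order from $w$ to $w'$, correspond to the cells of $p = c_0, c_1, \ldots, c_m = q$ (all of class $A$), while $p'$ and $q'$ are the link sites immediately beyond the two outer boundaries of the angle.

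Passing to the restricted point set $\{p\} \cup P$ removes the class-$A$ link sites $c_1, \ldots, c_m$ from the link at $j$. Crucially, no new sites can enter the link, because deleting sites only increases distances-to-closest and cannot bring previously-farther sites up to the link radius. Within and immediately adjacent to the chosen angle, the restricted link therefore retains only $p'$, $p$, and $q'$, in that angular order, with $p$ and $q'$ angularly adjacent. Their common ray in the restricted link lifts to a $(d-1)$-dimensional Voronoi wall between $p$ and $q'$ in the restricted diagram, so by \cref{lem:extreme-wall} the inverted image of $q'$ is extreme in $S_p$ and $q'$ is added to $R$.

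The hard part is the link analysis, especially ensuring the argument survives without any general-position assumption. The key observation I rely on is the monotonicity that deleting sites only enlarges the cells of the surviving sites, so that the wedges of $p$ and $q'$ must jointly fill the angular range vacated by $c_1, \ldots, c_m$ and meet along a common boundary; the combinatorial fact that angular adjacency in the link at $j$ is exactly equivalent to sharing a Voronoi wall near $j$ then finishes the proof.
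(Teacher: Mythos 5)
Your proof is correct and follows essentially the same route as the paper's: use \cref{lem:both-defining} to get the inner defining point $p$ of $w$, then pass to the link of the shared junction, argue that in the Voronoi diagram of $S_p$ the wedge of $p$ becomes angularly adjacent to the wedge of the outer defining point of $w'$ because the intervening same-class cells have been deleted (and deletions cannot introduce new link sites or permute the circular order), and finish with \cref{lem:extreme-wall}. Your write-up is in fact a bit more explicit than the paper's at the two points it leaves implicit, namely that no new sites can enter the link upon deletion and that angular adjacency in the link corresponds to sharing a $(d-1)$-dimensional wall.
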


\begin{proof}
By \cref{lem:both-defining}, \cref{alg:relevant} identifies the defining point $p$ of $w$ that lies within the angle between $w$ and $w'$. Let $q$ be the defining point of $w'$ that lies outside this angle, let $J$ be the junction of $w$ and $w'$, and consider the link of this junction, within a plane perpendicular to $J$. Within this link, the Voronoi cells incident to $J$ divide the plane into wedges, meeting at the point where $J$ crosses the plane of the link. The defining sites of these cells are all closer to this point than any other training points. For any subset of the training points that includes at least one of the defining sites of a cell incident to $J$, the cross-section of the Voronoi diagram in the plane of the link will still have the structure of a set of wedges for the remaining sites, in the same circular ordering. In particular, in the Voronoi diagram of $S_p$, the wedges of $p$ and $q$ will be consecutive. Therefore, these two cells are separated by a wall in the Voronoi diagram of $S_p$, and by \cref{lem:extreme-wall}, $q$ will be found as the inverted image of one of the extreme points in $S_p$.
\end{proof}

\begin{lemma}
\label{lem:relevant-are-found}
\cref{alg:relevant} identifies all relevant points with respect to the whole training set, and with respect to the set of points that it identies.
\end{lemma}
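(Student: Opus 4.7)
The plan begins with a characterization of relevance: a training point $p$ is relevant with respect to the whole set if and only if $p$ is a defining point of some wall belonging to some decision boundary component $D$. The nontrivial direction uses the same perturbation idea as in the proof of \cref{lem:mst-endpoints-are-relevant}: if removing $p$ changes the classification of some query point, then that query point lies near a Voronoi wall between $p$'s cell and an adjacent cell of different classification, and such a classification-changing wall belongs to some decision boundary component.

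Given such $p$ and a wall $w$ of $D$ having $p$ as a defining point, I would start from the defining point supplied by \cref{lem:at-least-one} and propagate along $D$ until reaching $w$. By the definition of $B_C$, there is a path $w_0, w_1, \ldots, w_r = w$ of walls in $D$ such that each consecutive pair $w_{i-1}, w_i$ shares a junction $J_i$. I would induct on $i$, showing that the algorithm finds both defining points of every $w_i$; the base case $i=0$ follows from \cref{lem:at-least-one,lem:both-defining}.

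The main obstacle is the inductive step, because merely sharing a junction is weaker than the hypothesis of \cref{lem:consecutive-walls}, which requires the two walls to be \emph{consecutive} classification-changing walls at the junction, with a single-classification sector between them. To bridge this gap, I would examine the link of $J_i$: the cells incident to $J_i$ appear as a cyclic arrangement of wedges, and the classification-changing walls at $J_i$ form a cyclic subsequence of the separating rays, with every consecutive pair in this subsequence bounding a sector of cells of one classification (so \cref{lem:consecutive-walls} applies to each consecutive pair). Because any wall in $\partial_k(C)$ has one side inside $C$ and the other side outside $C$ and hence of different classification, both $w_{i-1}$ and $w_i$ sit in this cyclic subsequence; iterating \cref{lem:consecutive-walls} around $J_i$ starting from $w_{i-1}$, and invoking \cref{lem:both-defining} at each step to upgrade the single defining point produced to both defining points, the algorithm reaches $w_i$ after finitely many hops along the cycle. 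Note that intermediate walls visited around $J_i$ need not themselves lie in $D$; this is harmless, since their defining points are relevant by \cref{lem:extreme-points-are-relevant} and are therefore added to $R$, triggering further inner-loop iterations as needed.

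For the second assertion, about relevance with respect to the subset $R$ identified by the algorithm, no new argument is required: relevance is defined only for points belonging to a given training set, so any point relevant with respect to $R$ automatically lies in $R$, and \cref{lem:all-are-relevant} already guarantees that every element of $R$ is relevant with respect to $R$.
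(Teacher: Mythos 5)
Your proposal follows essentially the same route as the paper: locate one wall of the relevant decision boundary component via the minimum spanning tree (\cref{lem:at-least-one}), then propagate defining-point identification along the component using \cref{lem:both-defining} and \cref{lem:consecutive-walls}. Two observations on where you deviate, neither fatal.

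First, you are right to flag that the hypothesis of \cref{lem:consecutive-walls} (consecutive classification-changing walls with a monochromatic sector between them) is strictly stronger than ``sharing a junction,'' which is all the adjacency in $B_C$ gives. The paper compresses this into a single ``and therefore also to any other wall of $D$ that is adjacent at a junction,'' whereas your argument spells out the missing step: iterate around the cyclic subsequence of classification-changing walls in the link of the junction, using \cref{lem:consecutive-walls} to step from each to the next and \cref{lem:both-defining} to upgrade to both defining points before the next step. Your explicit treatment is a genuine improvement in rigor over the paper's terse phrasing. The one thing worth making explicit is that this cyclic subsequence at a junction always has even size (a walk around the link must return to its starting classification), so the iteration is never vacuous when the junction lies on a decision boundary; this is implicit in \cref{lem:boundaries-separate}.

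Second, your opening characterization (``relevant iff a defining point of some wall of some decision boundary component'') is the right idea, but the phrase ``that query point lies near a Voronoi wall\dots'' and the reference to ``the same perturbation idea'' are loose. A witnessing query point $q$ can sit deep in the interior of $p$'s cell, far from any wall. The precise fact the paper uses is: if $p$ is the nearest training point to $q$ and $p'$ is the second nearest, then $p$ and $p'$ share a Voronoi wall (shrinking the ball of radius $d(q,p')$ about $q$ toward $p'$ keeps all other sites strictly outside while bringing $p$ to the boundary). Stating that cleanly would tighten your first paragraph. Finally, your disposal of the second assertion — that relevance with respect to $R$ is only defined for points of $R$, so containment is automatic, and \cref{lem:all-are-relevant} gives the converse — is logically sound and arguably simpler than the paper's argument that removing irrelevant points cannot create new relevant ones; both yield the equality of the two sets of relevant points that \cref{thm:correctness} requires.
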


\begin{proof}
Let $r$ be a training point that is relevant with respect to the whole training set, and let $q$ be a query point that would get the wrong classification if $r$ is removed from the training set (witnessing the relevance of $r$). Then $r$ must be the nearest training point to $q$, and the second-nearest training point $p$ to $q$ must have a different classification than $r$. Because $p$ and $r$ are the nearest and second-nearest points to $q$, they must share a Voronoi wall~$w$, which belongs to a decision boundary component $D$ of the decision component of $r$. By \cref{lem:at-least-one}, \cref{alg:relevant} identifies at least one defining point of $D$, and by \cref{lem:consecutive-walls} the defining points that it identifies extend from any wall to any other wall of $D$ that is consecutive at a junction, and therefore also to any other wall of $D$ that is adjacent at a junction, and to any other wall that is connected through a sequence of adjacencies at junctions. But $D$ was defined as a set of walls that are adjacent in this way, so \cref{alg:relevant} identifies at least one defining point of $w$. By \cref{lem:both-defining} it identifies both defining points, and therefore it identifies~$w$.

Removing an irrelevant point does not change the decision boundary components or the defining points of their walls, so it does not create new relevant points. Therefore, the relevant points with respect to the whole training set identified by \cref{alg:relevant} are also all of the relevant points with respect to the set of points that it identifies.
\end{proof}

\begin{theorem}
\label{thm:correctness}
The nearest-neighbor classifications obtained from the set of relevant points identified by \cref{alg:relevant} equal the nearest-neighbor classifications obtained from the whole training set.
\end{theorem}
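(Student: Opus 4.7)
The plan is to derive \cref{thm:correctness} by combining the two preceding lemmas with a short induction. \cref{lem:all-are-relevant} tells us that $R$ consists only of relevant points, and \cref{lem:relevant-are-found} tells us that $R$ contains every relevant point of the training set $T$, so $R$ equals the set of points of $T$ that are relevant with respect to $T$. In particular, every point of $T \setminus R$ is irrelevant, meaning by definition that its individual removal leaves the nearest-neighbor classification function unchanged. What remains is to promote this one-at-a-time statement to the simultaneous removal of all of $T \setminus R$.

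I would do this by induction on $|T \setminus R|$, removing irrelevant points from $T$ one at a time while maintaining the invariant that the remaining training set $T' \supseteq R$ produces the same nearest-neighbor classifications as $T$ and that every point of $T' \setminus R$ is still irrelevant with respect to $T'$. The key tool is a geometric characterization of irrelevance already implicit in the proof of \cref{lem:mst-endpoints-are-relevant}: a point $p$ is irrelevant with respect to a training set containing it if and only if every Delaunay neighbor of $p$ in that set shares $p$'s classification. The forward direction is the contrapositive of the perturbation argument used there, and the converse holds because when $p$ is deleted its Voronoi cell is absorbed by its Delaunay neighbors, so any query in the former cell inherits one of those neighbors' classifications.

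The inductive step then reduces to a brief case analysis. When an irrelevant $p_1$ is removed from $T'$, another irrelevant point $p_2 \in T' \setminus R$ either was not a Delaunay neighbor of $p_1$, in which case $p_2$'s cell and its Delaunay neighbors are unchanged, or was a Delaunay neighbor of $p_1$, in which case the characterization applied to $p_1$ forces $p_2$ to share $p_1$'s classification, and the new Delaunay neighbors that $p_2$ gains from the absorption of $p_1$'s former cell are themselves former neighbors of $p_1$ and so share that same classification. Either way $p_2$ remains irrelevant, so the induction proceeds. After exhausting $T \setminus R$ we obtain $T' = R$ with the same nearest-neighbor classifications as $T$, which is exactly the theorem. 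I expect the main subtlety to be establishing and maintaining the Delaunay characterization of irrelevance; once that is in hand, the rest is bookkeeping.
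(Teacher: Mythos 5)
Your proposal is correct in outline but takes a genuinely different route from the paper. The paper's proof of \cref{thm:correctness} is a direct, one-shot argument: for any query point $q$ and any training set containing the relevant points, the nearest-neighbor label of $q$ equals the label of $q$'s nearest \emph{relevant} point $r$, because any Voronoi wall crossed by the segment $qr$ whose two defining sites had different labels would make those two sites relevant, and one of them would lie strictly closer to $q$ than $r$, contradicting the choice of $r$. Combined with \cref{lem:relevant-are-found} (which says the relevant set is preserved by the thinning), this gives the theorem immediately, with no induction. Your proposal instead removes irrelevant points one at a time and maintains an invariant, which requires a stronger auxiliary lemma that the paper never needs: the characterization that a point is irrelevant if and only if all its Delaunay neighbors share its label, whose converse direction rests on the ``absorption'' fact that deleting a site hands its cell entirely to its former Delaunay neighbors.

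That absorption fact is true (it is the standard locality property of Delaunay deletion), but it is not proved anywhere in the paper and is not as immediate as the rest of your argument suggests; a careful proof must show that for every $q$ in the interior of $\operatorname{cell}(p)$, the second-nearest site to $q$ shares a Voronoi wall with $p$, which takes a genuine argument and some care about degeneracies (the paper explicitly declines to assume general position). You correctly flagged this as the main subtlety, which is the right instinct -- but it is worth noting that the paper's route avoids this machinery entirely. Your inductive bookkeeping (that non-neighbors of $p_1$ are unaffected, and that neighbors of $p_1$ inherit only former neighbors of $p_1$ as new Delaunay neighbors) is also correct but would need the same absorption lemma to justify. So: valid strategy, probably provable, but heavier than the paper's, and the one lemma it depends on is exactly the one you'd have to stop and prove from scratch.
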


\begin{proof}
The classification of any point $p$ of $\mathbb{R}^d$, for a given set of labeled points, may be obtained from the label of its nearest relevant point $r$ for the given set; all Voronoi walls that separate $p$ from $r$ must have the same label on both sides, for otherwise the two points defining the wall would be relevant and at least one would be nearer to $p$ than to $r$, from which it follows that the nearest neighbor to $p$ has the same label as $r$. By \cref{lem:relevant-are-found} the identity of the relevant points does not change between the whole training points and the subset of points identified by the algorithm, and therefore the identity of the nearest relevant point cannot change.
\end{proof}

\section{Analysis}

We are now ready to prove our main results on the time bounds for finding relevant points.

\begin{theorem}
\label{thm:simple-simple}
\cref{alg:relevant}, implemented using Jarník's or Borůvka's algorithm for its minimum spanning trees, \cref{alg:simple-extreme} for extreme points, and Seidel's algorithm~\cite{Sei-DCG-91} for the linear programming steps of \cref{alg:simple-extreme}, finds all relevant points of a given training set of size $n$, having $k$ relevant points, in any constant dimension $d$, in total time $O(n^2+k^2n)$.
\end{theorem}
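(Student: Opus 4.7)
The plan is to split the running time into two contributions: the initial minimum spanning tree construction and the $k$ iterations of the outer loop of step~3. Running Jarník's or Borůvka's algorithm on the complete Euclidean graph (whose $\binom{n}{2}$ edge weights are computed once from pairwise distances) takes $O(n^2)$ time, and a single pass through the $n-1$ tree edges to identify bichromatic ones and initialize $R$ takes another $O(n)$ time.

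Next I would use \cref{lem:all-are-relevant} together with \cref{lem:relevant-are-found} to conclude that $R$ eventually contains exactly the $k$ relevant points, so the outer loop of step~3 is iterated exactly $k$ times. For each such iteration with chosen point $r$, the inversion in step~(a) and the assembly of $S_r$ take $O(n)$ arithmetic operations. The heart of the argument lies in step~(b), where I would invoke \cref{alg:simple-extreme} on $S_r$. The key observation needed for the bound is that, combining \cref{lem:extreme-wall} with \cref{lem:extreme-points-are-relevant}, every extreme point of $S_r$ other than $r$ itself corresponds to a training point that is relevant, so $S_r$ has at most $k+1$ extreme points. Consequently, the outer loop of \cref{alg:simple-extreme} runs through at most $|S_r|\le n$ points while its conditional step~(c) fires at most $O(k)$ times, and each linear programming feasibility test has $O(k)$ constraints in the constant dimension~$d$, solvable by Seidel's algorithm in expected $O(k)$ time. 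The total cost of one invocation is thus $O(nk)$, dominated by the $n$ linear programs; the $k$ auxiliary $O(n)$ dot-product maximizations contribute the same $O(nk)$.

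Summing $O(nk)$ across the $k$ iterations of the outer loop and adding the $O(n^2)$ cost of the minimum spanning tree yields the claimed bound of $O(n^2+k^2n)$. The subtle step is the extreme-point bound for $S_r$: without it, the generic cost of \cref{alg:simple-extreme} on an $n$-point set would be $\Theta(n^2)$ per invocation, degrading the overall bound to $O(kn^2)$ and losing the output sensitivity that distinguishes this result from Clarkson's earlier algorithm. Everything else is routine bookkeeping of the per-step cost against the already-proved structural facts about $R$.
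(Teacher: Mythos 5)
Your proposal is correct and follows the same outline as the paper's own (very terse) proof: MST in $O(n^2)$, then $k$ invocations of \cref{alg:simple-extreme} each costing $O(nk)$ because \cref{lem:extreme-points-are-relevant} caps the number of extreme points of $S_r$ at $k$, summing to $O(n^2+k^2 n)$. You simply spell out the bookkeeping and the justification for the per-invocation $O(nk)$ bound that the paper compresses into the single line ``there are $k$ extreme-point computations, each of which identifies at most $k$ extreme points.''
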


\begin{proof}
The correctness of the algorithm is shown by \cref{lem:all-are-relevant} and \cref{lem:relevant-are-found}. By \cref{lem:all-are-relevant}, there are $k$ extreme-point computations, each of which identifies at most $k$ extreme points. The time bounds for the overall algorithm are obtained simply by plugging in the time bounds for its subroutines.
\end{proof}

\begin{theorem}
\cref{alg:relevant}, implemented using the algorithm of Agarwal, Edelsbrunner, Schwarzkopf, and Welzl~\cite{AgaEdeSch-DCG-91} for its minimum spanning trees, and the algorithm of Chan~\cite{Cha-DCG-96b} for its extreme points, finds all relevant points of a given training set of size $n$, having $k$ relevant points, in randomized expected time $O\bigl((n\log n)^{4/3}+kn\log k\bigr)$ for $d=3$ and in time
\[
O\left(n^{2-\frac{2}{\lceil d/2\rceil+1}+\varepsilon}
+k(nk)^{1-\frac{1}{\lfloor d/2\rfloor+1}}(\log n)^{O(1)}\right).
\]
for all constant dimensions $d>3$.
\end{theorem}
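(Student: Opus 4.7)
The plan is to parallel the proof of \cref{thm:simple-simple}: correctness is already guaranteed by \cref{lem:all-are-relevant} and \cref{lem:relevant-are-found}, and what remains is to replace the subroutine cost estimates, keeping the same decomposition of the algorithm's work into one minimum spanning tree computation and $k$ extreme-point computations.

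For the first phase, a single invocation of the Agarwal--Edelsbrunner--Schwarzkopf--Welzl minimum spanning tree algorithm contributes $O((n\log n)^{4/3})$ expected time for $d=3$ and $O(n^{2-2/(\lceil d/2\rceil+1)+\varepsilon})$ for $d\ge 4$, which accounts for the first summand of each claimed bound.

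For the second phase, \cref{lem:all-are-relevant} tells us that exactly $k$ extreme-point calls are made, each on an input $S_r$ of size at most $n$. The key output-sensitivity input to Chan's algorithm is that any single such call returns $O(k)$ extreme points: by \cref{lem:extreme-wall}, every extreme point of $S_r$ other than $r$ itself corresponds to a training point of classification different from $r$ whose cell shares a wall with the cell of $r$ in the Voronoi diagram of $S_r$, and by \cref{lem:extreme-points-are-relevant} each such point is relevant and hence lies among at most $k$ candidates. I would then invoke Chan's output-sensitive hull algorithm on each call with input size $O(n)$ and output size $O(k)$, obtaining per-call costs of $O(n\log k)$ for $d=3$ and $O\bigl(n(\log k)^{O(1)}+(nk)^{1-1/(\lfloor d/2\rfloor+1)}(\log n)^{O(1)}\bigr)$ for $d\ge 4$. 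Summing over the $k$ calls produces the second summand of each claimed bound.

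The chief obstacle I anticipate is bookkeeping rather than conceptual: for $d\ge 4$ one needs to argue that the aggregated lower-order contribution $O(kn(\log n)^{O(1)})$ coming from the first term of Chan's per-call bound is subsumed by the sum of the two claimed terms. I would handle this by splitting on the size of $k$. When $k$ is large enough that $n\le(nk)^{1-1/(\lfloor d/2\rfloor+1)}$, the $kn$ aggregate is dominated by the output-sensitive extreme-point term up to the polylogarithmic factor already present in it. When $k$ is smaller, the entire extreme-point phase is dominated by the minimum-spanning-tree term, whose exponent carries arbitrary slack $\varepsilon>0$ available to swallow polylogarithmic factors. Combining the two regimes yields the unified stated bound.
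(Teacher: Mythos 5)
Your overall decomposition mirrors the paper exactly: the paper's own proof of this theorem is a single sentence deferring to the proof of \cref{thm:simple-simple}, which in turn just cites \cref{lem:all-are-relevant} and \cref{lem:relevant-are-found} for correctness, notes that there are $k$ extreme-point computations each with input of size at most $n$ and output of size at most $k$, and plugs in the subroutine bounds. So up through ``summing over the $k$ calls produces the second summand'' you are on the paper's track and in fact being more explicit than the paper bothers to be.

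However, the bookkeeping refinement you added, which the paper does not attempt, contains a genuine arithmetic gap at $d=4$. In your second regime ($n>(nk)^{1-1/(\lfloor d/2\rfloor+1)}$, equivalently $k<n^{\beta/(1-\beta)}$ with $\beta=1/(\lfloor d/2\rfloor+1)$) you assert that the aggregate $O(kn(\log k)^{O(1)})$ is dominated by the minimum-spanning-tree term $n^{2-2/(\lceil d/2\rceil+1)+\varepsilon}$, with $\varepsilon$ swallowing the polylog. That requires $k\le n^{1-2/(\lceil d/2\rceil+1)+\varepsilon}$. Writing $a=\lfloor d/2\rfloor+1$ and $b=\lceil d/2\rceil+1$, the regime gives $k<n^{1/(a-1)}$ and the domination needs $k\lesssim n^{(b-2)/b}$; the former implies the latter precisely when $(a-2)(b-2)\ge 2$, which holds for every $d\ge 5$ but fails for $d=4$ (where $a=b=3$, so $(a-2)(b-2)=1$). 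Concretely for $d=4$, if $k=n^{0.4}$ the aggregate first-term cost is $\Theta(n^{1.4})$ up to polylogs, which exceeds both $n^{4/3+\varepsilon}$ (for small $\varepsilon$) and $k^{5/3}n^{2/3}=n^{4/3}$. So your two-regime case split leaves the window $n^{1/3+\varepsilon}\lesssim k\lesssim n^{1/2}$ uncovered at $d=4$. The $\varepsilon$ slack in the MST exponent absorbs polylogarithmic factors only, not a polynomial gap of order $n^{1/15}$. Worth noting: the paper's own proof is silent on this point, so your more careful analysis actually surfaces a discrepancy that the stated theorem (and the $d=4$ example in the introduction) seems to gloss over; for $d=4$ an extra $kn(\log n)^{O(1)}$ term appears to be needed in the bound, while for every $d\ge 5$ your argument closes cleanly.
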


The proof is the same as for \cref{thm:simple-simple}.

\begin{theorem}
\cref{alg:relevant}, implemented using Jarník's or Borůvka's algorithm for its minimum spanning trees, \cref{alg:simple-extreme} for extreme points, and an arbitrary linear programming algorithm for the linear programming steps of \cref{alg:simple-extreme}, finds all relevant points of a given training set of size $n$, having $k$ relevant points, in any constant dimension $d$, in time $O\bigl(n^2+k^2n+knL(d,k)\bigr)$, where $L(x,y)$ denotes the time to solve a linear program of dimension (number of variables) $x$ and size (number of constraints) $y$.
\end{theorem}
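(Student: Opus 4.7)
The plan is to recycle the proof of \cref{thm:simple-simple}, changing only the time accounting for the linear programming subproblems inside \cref{alg:simple-extreme}. Correctness is independent of which LP solver is plugged in, so it still follows directly from \cref{lem:all-are-relevant} and \cref{lem:relevant-are-found}.

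For the running time, I would begin with the minimum spanning tree cost: running Jarník's or Borůvka's algorithm on the weighted complete graph on the training points takes $O(n^2)$ time, independent of the choice of $d$. Next I would count the work spent on extreme-point computations. By \cref{lem:all-are-relevant} the outer loop of \cref{alg:relevant} executes exactly $k$ times, once per point added to $R$, and each execution runs \cref{alg:simple-extreme} on $S_r$, a set of size at most $n+1$. The crucial observation is that the set $E$ maintained inside one such call never exceeds $O(k)$ in size: \cref{lem:extreme-wall} identifies the extreme points of $S_r\setminus\{r\}$ with the Voronoi neighbors of $r$ of opposite classification, and \cref{lem:extreme-points-are-relevant} says each such neighbor is relevant, so at most $k$ extreme points can ever appear in $S_r$. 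Consequently each of the $n$ inner iterations of \cref{alg:simple-extreme} solves a linear program in $d$ variables with at most $O(k)$ constraints at cost $L(d,k)$, and each of the at most $k$ successfully discovered extreme points requires an $O(n)$ scan to find the maximizer of $v\cdot p'$. One extreme-point call therefore costs $O(kn + nL(d,k))$.

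Summing over the $k$ outer iterations and adding the MST cost yields a total of $O(n^2 + k^2n + knL(d,k))$, matching the claimed bound. The argument is essentially a mechanical plug-in of time bounds; the only point that really needs care, and therefore the only step that is genuinely interesting, is the verification that the LP size is governed by the output parameter $k$ rather than by the input size $n$, which is precisely what the two lemmas cited above establish.
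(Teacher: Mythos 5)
Your proposal is correct and follows the same plug-in approach the paper uses (the paper simply defers to the proof of \cref{thm:simple-simple}); you have merely made explicit the bookkeeping, in particular the key fact stated in the paper's discussion of \cref{alg:simple-extreme} that each linear program has only $O(k)$ constraints because $E$ contains only extreme points.
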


Again, the proof is the same.

\bibliographystyle{plainurl}
\bibliography{voronoi}

\begin{thebibliography}{10}

\bibitem{AgaEdeSch-DCG-91}
Pankaj~K. Agarwal, Herbert Edelsbrunner, Otfried Schwarzkopf, and Emo Welzl.
\newblock {Euclidean minimum spanning trees and bichromatic closest pairs}.
\newblock {\em Discrete Comput. Geom.}, 6(5):407{--}422, 1991.
\newblock \href {http://dx.doi.org/10.1007/BF02574698}
  {\path{doi:10.1007/BF02574698}}.

\bibitem{AgaMat-Algo-95}
Pankaj~K. Agarwal and Ji{\v{r}}{\'\i} Matou{\v{s}}ek.
\newblock {Dynamic half-space range reporting and its applications}.
\newblock {\em Algorithmica}, 13(4):325{--}345, 1995.
\newblock \href {http://dx.doi.org/10.1007/BF01293483}
  {\path{doi:10.1007/BF01293483}}.

\bibitem{AndInd-CACM-08}
Alexandr Andoni and Piotr Indyk.
\newblock {Near-optimal hashing algorithms for approximate nearest neighbor in
  high dimensions}.
\newblock {\em Commun. ACM}, 51(1):117{--}122, 2008.
\newblock \href {http://dx.doi.org/10.1145/1327452.1327494}
  {\path{doi:10.1145/1327452.1327494}}.

\bibitem{AryMalMou-JACM-10}
Sunil Arya, Theocharis Malamatos, and David~M. Mount.
\newblock {Space-time tradeoffs for approximate nearest neighbor searching}.
\newblock {\em J. ACM}, 57(1):A1:1{--}A1:54, 2010.
\newblock \href {http://dx.doi.org/10.1145/1613676.1613677}
  {\path{doi:10.1145/1613676.1613677}}.

\bibitem{AryMouNet-JACM-98}
Sunil Arya, David~M. Mount, Nathan~S. Netanyahu, Ruth Silverman, and Angela~Y.
  Wu.
\newblock {An optimal algorithm for approximate nearest neighbor searching in
  fixed dimensions}.
\newblock {\em J. ACM}, 45(6):891{--}923, 1998.
\newblock \href {http://dx.doi.org/10.1145/293347.293348}
  {\path{doi:10.1145/293347.293348}}.

\bibitem{AurKleLee-13}
Franz Aurenhammer, Rolf Klein, and Der-Tsai Lee.
\newblock {\em {Voronoi Diagrams and Delaunay Triangulations}}.
\newblock World Scientific, 2013.
\newblock \href {http://dx.doi.org/10.1142/8685} {\path{doi:10.1142/8685}}.

\bibitem{BreDemEri-DCG-05}
David Bremner, Erik Demaine, Jeff Erickson, John Iacono, Stefan Langerman, Pat
  Morin, and Godfried Toussaint.
\newblock {Output-sensitive algorithms for computing nearest-neighbour decision
  boundaries}.
\newblock {\em Discrete Comput. Geom.}, 33(4):593{--}604, 2005.
\newblock \href {http://dx.doi.org/10.1007/s00454-004-1152-0}
  {\path{doi:10.1007/s00454-004-1152-0}}.

\bibitem{CalKos-SODA-93}
Paul~B. Callahan and S.~Rao Kosaraju.
\newblock {Faster algorithms for some geometric graph problems in higher
  dimensions}.
\newblock In {\em Proc. 4th Symp. Discrete Algorithms (SODA 1993)}, pages
  291{--}300. ACM, 1993.

\bibitem{Cha-DCG-96a}
Timothy~M. Chan.
\newblock {Optimal output-sensitive convex hull algorithms in two and three
  dimensions}.
\newblock {\em Discrete Comput. Geom.}, 16(4):361{--}368, 1996.
\newblock \href {http://dx.doi.org/10.1007/BF02712873}
  {\path{doi:10.1007/BF02712873}}.

\bibitem{Cha-DCG-96b}
Timothy~M. Chan.
\newblock {Output-sensitive results on convex hulls, extreme points, and
  related problems}.
\newblock {\em Discrete Comput. Geom.}, 16(4):369{--}387, 1996.
\newblock \href {http://dx.doi.org/10.1007/BF02712874}
  {\path{doi:10.1007/BF02712874}}.

\bibitem{Cha-DCG-98}
Timothy~M. Chan.
\newblock {Approximate nearest neighbor queries revisited}.
\newblock {\em Discrete Comput. Geom.}, 20(3):359{--}373, 1998.
\newblock \href {http://dx.doi.org/10.1007/PL00009390}
  {\path{doi:10.1007/PL00009390}}.

\bibitem{Cha-TALG-18}
Timothy~M. Chan.
\newblock {Improved deterministic algorithms for linear programming in low
  dimensions}.
\newblock {\em ACM Trans. Algorithms}, 14(3):A30:1{--}A30:10, 2018.
\newblock \href {http://dx.doi.org/10.1145/3155312}
  {\path{doi:10.1145/3155312}}.

\bibitem{Cha-DCG-93}
Bernard Chazelle.
\newblock {An optimal convex hull algorithm in any fixed dimension}.
\newblock {\em Discrete Comput. Geom.}, 10(4):377{--}409, 1993.
\newblock \href {http://dx.doi.org/10.1007/BF02573985}
  {\path{doi:10.1007/BF02573985}}.

\bibitem{ChaMat-CG-95}
Bernard Chazelle and Ji{\v{r}}{\'\i} Matou{\v{s}}ek.
\newblock {Derandomizing an output-sensitive convex hull algorithm in three
  dimensions}.
\newblock {\em Comput. Geom.}, 5(1):27{--}32, 1995.
\newblock \href {http://dx.doi.org/10.1016/0925-7721(94)00018-Q}
  {\path{doi:10.1016/0925-7721(94)00018-Q}}.

\bibitem{Cla-FOCS-94}
Kenneth~L. Clarkson.
\newblock {More output-sensitive geometric algorithms}.
\newblock In {\em Proc. 35th Symp. Foundations of Computer Science (FOCS
  1994)}, pages 695{--}702. IEEE Computer Society, 1994.
\newblock \href {http://dx.doi.org/10.1109/SFCS.1994.365723}
  {\path{doi:10.1109/SFCS.1994.365723}}.

\bibitem{Cla-JACM-95}
Kenneth~L. Clarkson.
\newblock {Las Vegas algorithms for linear and integer programming when the
  dimension is small}.
\newblock {\em J. ACM}, 42(2):488{--}499, 1995.
\newblock \href {http://dx.doi.org/10.1145/201019.201036}
  {\path{doi:10.1145/201019.201036}}.

\bibitem{ClaSho-DCG-89}
Kenneth~L. Clarkson and Peter~W. Shor.
\newblock {Applications of random sampling in computational geometry. II}.
\newblock {\em Discrete Comput. Geom.}, 4(5):387{--}421, 1989.
\newblock \href {http://dx.doi.org/10.1007/BF02187740}
  {\path{doi:10.1007/BF02187740}}.

\bibitem{CovHar-TIT-67}
T.~Cover and P.~Hart.
\newblock {Nearest neighbor pattern classification}.
\newblock {\em IEEE Transactions on Information Theory}, 13(1):21{--}27, 1967.
\newblock \href {http://dx.doi.org/10.1109/tit.1967.1053964}
  {\path{doi:10.1109/tit.1967.1053964}}.

\bibitem{DatImmInd-SoCG-04}
Mayur Datar, Nicole Immorlica, Piotr Indyk, and Vahab~S. Mirrokni.
\newblock {Locality-sensitive hashing scheme based on $p$-stable
  distributions}.
\newblock In Jack Snoeyink and Jean-Daniel Boissonnat, editors, {\em Proc. 20th
  Symp. Computational Geometry (SoCG 2004)}, pages 253{--}262. ACM, 2004.
\newblock \href {http://dx.doi.org/10.1145/997817.997857}
  {\path{doi:10.1145/997817.997857}}.

\bibitem{DewVra-UM-77}
A.~K. Dewdney and J.~K. Vranch.
\newblock {A convex partition of $\mathbb{R}^3$ with applications to {C}rum's
  problem and {K}nuth's post-office problem}.
\newblock {\em Utilitas Math.}, 12:193{--}199, 1977.

\bibitem{Dwy-DCG-91}
Rex~A. Dwyer.
\newblock {Higher-dimensional Voronoi diagrams in linear expected time}.
\newblock {\em Discrete Comput. Geom.}, 6(4):343{--}367, 1991.
\newblock \href {http://dx.doi.org/10.1007/BF02574694}
  {\path{doi:10.1007/BF02574694}}.

\bibitem{DyeFri-MP-89}
Martin~E. Dyer and Alan~M. Frieze.
\newblock {A randomized algorithm for fixed-dimensional linear programming}.
\newblock {\em Math. Programming}, 44(2, (Ser. A)):203{--}212, 1989.
\newblock \href {http://dx.doi.org/10.1007/BF01587088}
  {\path{doi:10.1007/BF01587088}}.

\bibitem{Epp-HCG-00}
David Eppstein.
\newblock {Spanning trees and spanners}.
\newblock In J{\"o}rg-Rudiger Sack and Jorge Urrutia, editors, {\em Handbook of
  Computational Geometry}, pages 425{--}461. Elsevier, 2000.
\newblock \href {http://dx.doi.org/10.1016/B978-044482537-7/50010-3}
  {\path{doi:10.1016/B978-044482537-7/50010-3}}.

\bibitem{EppGooSun-IJCGA-08}
David Eppstein, Michael~T. Goodrich, and Jonathan~Z. Sun.
\newblock {Skip quadtrees: dynamic data structures for multidimensional point
  sets}.
\newblock {\em Internat. J. Comput. Geom. Appl.}, 18(1-2):131{--}160, 2008.
\newblock \href {http://dx.doi.org/10.1142/S0218195908002568}
  {\path{doi:10.1142/S0218195908002568}}.

\bibitem{GioIndMot-VLDB-99}
Aristides Gionis, Piotr Indyk, and Rajeev Motwani.
\newblock {Similarity search in high dimensions via hashing}.
\newblock In Malcolm~P. Atkinson, Maria~E. Or{\l}owska, Patrick Valduriez,
  Stanley~B. Zdonik, and Michael~L. Brodie, editors, {\em Proc. 25th Int. Conf.
  Very Large Data Bases (VLDB 1999)}, pages 518{--}529. Morgan Kaufmann, 1999.
\newblock URL: \url{https://www.vldb.org/conf/1999/P49.pdf}.

\bibitem{Gru-CP-03}
Branko Gr{\"u}nbaum.
\newblock {\em {Convex Polytopes}}, volume 221 of {\em Graduate Texts in
  Mathematics}.
\newblock Springer, 2nd edition, 2003.
\newblock See in particular Gr{\"u}nbaum{'}s discussion of the Perles
  configuration on pp. 93{--}94.

\bibitem{GuiKnuSha-Algo-92}
Leonidas~J. Guibas, Donald~E. Knuth, and Micha Sharir.
\newblock {Randomized incremental construction of Delaunay and Voronoi
  diagrams}.
\newblock {\em Algorithmica}, 7(4):381{--}413, 1992.
\newblock \href {http://dx.doi.org/10.1007/BF01758770}
  {\path{doi:10.1007/BF01758770}}.

\bibitem{Hen-94}
Michael Henle.
\newblock {\em {A Combinatorial Introduction to Topology}}.
\newblock Dover Publications, 1994.

\bibitem{IndMot-STOC-98}
Piotr Indyk and Rajeev Motwani.
\newblock {Approximate nearest neighbors: towards removing the curse of
  dimensionality}.
\newblock In Jeffrey~Scott Vitter, editor, {\em Proc. 30th Symp. Theory of
  Computing (STOC 1998)}, pages 604{--}613. ACM, 1998.
\newblock \href {http://dx.doi.org/10.1145/276698.276876}
  {\path{doi:10.1145/276698.276876}}.

\bibitem{KarMan-Comb-20}
C.~S. Karthik and Pasin Manurangsi.
\newblock {On closest pair in Euclidean metric: monochromatic is as hard as
  bichromatic}.
\newblock {\em Combinatorica}, 40(4):539{--}573, 2020.
\newblock \href {http://dx.doi.org/10.1007/s00493-019-4113-1}
  {\path{doi:10.1007/s00493-019-4113-1}}.

\bibitem{KirSei-SICOMP-86}
David~G. Kirkpatrick and Raimund Seidel.
\newblock {The ultimate planar convex hull algorithm?}
\newblock {\em SIAM J. Comput.}, 15(1):287{--}299, 1986.
\newblock \href {http://dx.doi.org/10.1137/0215021}
  {\path{doi:10.1137/0215021}}.

\bibitem{Kle-AM-80}
Victor Klee.
\newblock {On the complexity of $d$-dimensional {V}oronoi diagrams}.
\newblock {\em Arch. Math.}, 34(1):75{--}80, 1980.
\newblock \href {http://dx.doi.org/10.1007/BF01224932}
  {\path{doi:10.1007/BF01224932}}.

\bibitem{KrzLevNil-NJC-99}
Drago Krznaric, Christos Levcopoulos, and Bengt~J. Nilsson.
\newblock {Minimum spanning trees in $d$ dimensions}.
\newblock {\em Nordic J. Comput.}, 6(4):446{--}461, 1999.

\bibitem{Mat-JCSS-95}
Ji{\v{r}}{\'\i} Matou{\v{s}}ek.
\newblock {Approximations and optimal geometric divide-and-conquer}.
\newblock {\em J. Comput. System Sci.}, 50(2):203{--}208, 1995.
\newblock \href {http://dx.doi.org/10.1006/jcss.1995.1018}
  {\path{doi:10.1006/jcss.1995.1018}}.

\bibitem{MatShaWel-Algo-96}
Ji{\v{r}}{\'\i} Matou{\v{s}}ek, Micha Sharir, and Emo Welzl.
\newblock {A subexponential bound for linear programming}.
\newblock {\em Algorithmica}, 16(4-5):498{--}516, 1996.
\newblock \href {http://dx.doi.org/10.1007/BF01940877}
  {\path{doi:10.1007/BF01940877}}.

\bibitem{Meg-JACM-84}
Nimrod Megiddo.
\newblock {Linear programming in linear time when the dimension is fixed}.
\newblock {\em J. ACM}, 31(1):114{--}127, 1984.
\newblock \href {http://dx.doi.org/10.1145/2422.322418}
  {\path{doi:10.1145/2422.322418}}.

\bibitem{OttSchSou-NJC-01}
Thomas Ottmann, Sven Schuierer, and Subbiah Soundaralakshmi.
\newblock {Enumerating extreme points in higher dimensions}.
\newblock {\em Nordic J. Comput.}, 8(2):179{--}192, 2001.

\bibitem{PreTam-SICOMP-92}
Franco~P. Preparata and Roberto Tamassia.
\newblock {Efficient point location in a convex spatial cell-complex}.
\newblock {\em SIAM J. Comput.}, 21(2):267{--}280, 1992.
\newblock \href {http://dx.doi.org/10.1137/0221020}
  {\path{doi:10.1137/0221020}}.

\bibitem{Sei-AGDM-91}
Raimund Seidel.
\newblock {Exact upper bounds for the number of faces in $d$-dimensional
  {V}oronoi diagrams}.
\newblock In {\em Applied Geometry and Discrete Mathematics}, volume~4 of {\em
  DIMACS Ser. Discrete Math. Theoret. Comput. Sci.}, pages 517{--}529. Amer.
  Math. Soc., 1991.

\bibitem{Sei-DCG-91}
Raimund Seidel.
\newblock {Small-dimensional linear programming and convex hulls made easy}.
\newblock {\em Discrete Comput. Geom.}, 6(5):423{--}434, 1991.
\newblock \href {http://dx.doi.org/10.1007/BF02574699}
  {\path{doi:10.1007/BF02574699}}.

\bibitem{ShaHoe-FOCS-75}
Michael~Ian Shamos and Dan Hoey.
\newblock {Closest-point problems}.
\newblock In {\em Proc. 16th Symp. Foundations of Computer Science (FOCS
  1975)}, pages 151{--}162. IEEE Computer Society, 1975.
\newblock \href {http://dx.doi.org/10.1109/SFCS.1975.8}
  {\path{doi:10.1109/SFCS.1975.8}}.

\bibitem{Wat-CJ-81}
D.~F. Watson.
\newblock {Computing the $n$-dimensional Delaunay tessellation with application
  to Voronoi polytopes}.
\newblock {\em Comput. J.}, 24(2):167{--}172, 1981.
\newblock \href {http://dx.doi.org/10.1093/comjnl/24.2.167}
  {\path{doi:10.1093/comjnl/24.2.167}}.

\end{thebibliography}
\end{document}